\DeclarePairedDelimiter\parens{\lparen}{\rparen}
\DeclarePairedDelimiter\brackets{[}{]}
\DeclareMathOperator{\diag}{diag}
\DeclareMathOperator{\trace}{tr}
\DeclareMathOperator{\matrixRank}{rank}
\DeclareMathOperator{\expectedValue}{\mathbb{E}}
\DeclareMathOperator{\blkdiag}{blkdiag}
\newcommand{\diagp}[1]{\diag\parens*{#1}}
\newcommand{\E}[1]{\expectedValue\brackets*{#1}}
\newcommand{\rank}[1]{\matrixRank\parens*{#1}}
\newcommand{\kron}{\otimes}
\newcommand{\etal}{\textit{et al}.\,}
\newcommand{\ie}{i.e., }
\newcommand{\eg}{\textit{e}.\textit{g}.\,}
\renewcommand{\v}[1]{\bm{#1}}
\def\noisevar{\gamma}
\newcommand{\sbeforekron}{\bar{s}}
\long\def\hide#1{}
\NewDocumentCommand{\grad}{e{_^}}{%
  \mathop{}\!%
  \mathop{}\mspace{-1mu}%
  \nabla
  \IfValueT{#1}{_{\!#1}}%
  \IfValueT{#2}{^{#2}}%
  \mspace{-1mu}
}
\newcommand\scalemath[2]{\scalebox{#1}{\mbox{\ensuremath{\displaystyle #2}}}}
\newtheorem{theorem}{Theorem}
\newtheorem{lemma}{Lemma}
\let\ORGhypersetup\hypersetup
\protected\def\hypersetup{\ORGhypersetup}
  \def\hypersetup#1{}%
  \let\Cref\crtCref
  \let\Cref\crtCref
\newenvironment{enumerateproof}
 {\begin{enumerate}
 [font=\upshape,label=(\alph*)]}
 {\end{enumerate}}
\def\th@plain{%
  \thm@notefont{}%
  \itshape %
}
\def\th@definition{%
  \thm@notefont{}%
  \normalfont %
}
\newcommand{\blue}[1]{{#1}}
\def\figwidthnormal{0.44\linewidth}
\newcommand\copyrighttext{%
  \footnotesize \textcopyright 2025 IEEE.  Personal use of this material is permitted.  Permission from IEEE must be obtained for all other uses, in any current or future media, including reprinting/republishing this material for advertising or promotional purposes, creating new collective works, for resale or redistribution to servers or lists, or reuse of any copyrighted component of this work in other works.
  DOI: \href{10.1109/TASLPRO.2025.3533371}{10.1109/TASLPRO.2025.3533371}}
\newcommand\copyrightnotice{%
\begin{tikzpicture}[remember picture,overlay]
\node[anchor=south,yshift=0pt] at (current page.south) {\fbox{\parbox{\dimexpr\textwidth-\fboxsep-\fboxrule\relax}{\copyrighttext}}};
\end{tikzpicture}%
}
\begin{document}

\title{Wideband Relative Transfer Function (RTF) Estimation Exploiting Frequency Correlations}

\author{Giovanni Bologni, Richard C. Hendriks and Richard Heusdens
\thanks{
}%
\thanks{
Manuscript received ???; revised ???; ?????.
This work was partly supported by the Dutch Research Council (NWO) and partly by the Signal Processing Systems Group, Delft University of Technology, Delft, The Netherlands.
The associate editor coordinating the review of this manuscript and approving it for publication was ??? (Corresponding author: Giovanni Bologni.) 
Giovanni Bologni and Richard C. Hendriks are with the Faculty of Electrical Engineering, Mathematics, and Computer Science, Delft University of Technology, The Netherlands (e-mail: G.Bologni@tudelft.nl; R.C.Hendriks@tudelft.nl).
Richard Heusdens is with the Faculty of Military Sciences at the Netherlands Defence Academy (NLDA) (e-mail: r.heusdens@mindef.nl).
}
}

\markboth{IEEE/ACM TRANSACTIONS ON AUDIO SPEECH AND LANGUAGE PROCESSING
}%
{}

\maketitle
\copyrightnotice

\begin{abstract}
This article focuses on estimating relative transfer functions (RTFs) for beamforming applications.
\blue{
Traditional methods often assume that spectra are uncorrelated, an assumption that is often violated in practical scenarios due to factors such as time-domain windowing or the non-stationary nature of signals, as observed in speech.
}
To overcome these limitations, we propose an RTF estimation technique that leverages spectral and spatial correlations through subspace analysis.
Additionally, we derive Cramér--Rao bounds (CRBs) for the RTF estimation task, providing theoretical insights into the achievable estimation accuracy.
These bounds reveal that channel estimation can be performed more accurately if the noise or the target signal exhibits spectral correlations.
Experiments with both real and synthetic data show that our technique outperforms the narrowband maximum-likelihood estimator, known as covariance whitening (CW), when the target exhibits spectral correlations.
Although the proposed algorithm generally achieves accuracy close to the theoretical bound, there is potential for further improvement, especially in scenarios with highly spectrally correlated noise.
\blue{While channel estimation has various applications, we demonstrate the method using a minimum variance distortionless (MVDR) beamformer for multichannel speech enhancement. A free Python implementation is also provided.}
\end{abstract}

\begin{IEEEkeywords}
Acoustic parameter estimation, relative transfer function, RTF, Cramér--Rao bound, CRB, correlation, channel.
\end{IEEEkeywords}

\vspace{-1mm}
\section{Introduction}\label{sec:intro}
\IEEEPARstart{S}{patial} filtering techniques can extract a target signal from the measurements of multiple sensors, also referred to as \textit{beamforming} \cite{gannot_consolidated_2017, hadad_binaural_2016}.
Most beamforming techniques, such as the minimum variance distortionless beamformer (MVDR) 
, 
rely on the knowledge of the relative transfer function (RTF) between a target emitter and a sensor array to virtually \textit{steer} the array towards the direction of interest \cite{gannot_signal_2001, doclo_acoustic_2010}.
RTFs generalize the angle or direction-of-arrival (DOA) concept in scenarios involving the proximity of the source to the receivers or the presence of reflections.
These scenarios commonly arise in acoustics and wireless communications, radar and sonar sensing, seismology, and medical imaging.

One fundamental assumption shared among many channel estimation techniques is that RTFs can be estimated independently per frequency bin after transforming the received signal to the short-time Fourier transform (STFT) domain \cite{cohen_relative_2004,koutrouvelis_robust_2019, hoang_joint_2021, li_low_2022, li_noise_2023, li_joint_2023}.
This implies that the signals are realizations of wide-sense stationary (WSS) processes or that distinct frequency components of the signal are mutually uncorrelated.
It was shown that distinct frequency components of a random process are statistically uncorrelated if and only if the process is WSS \cite{napolitano_-_2020}.
\begin{figure}[tb]
    \vspace{-3mm}
    \centering
    \setlength{\abovecaptionskip}{2pt plus 3pt minus 2pt} 
    \setlength{\belowcaptionskip}{4pt plus 3pt minus 2pt} 
    
    \includegraphics[width=\linewidth]{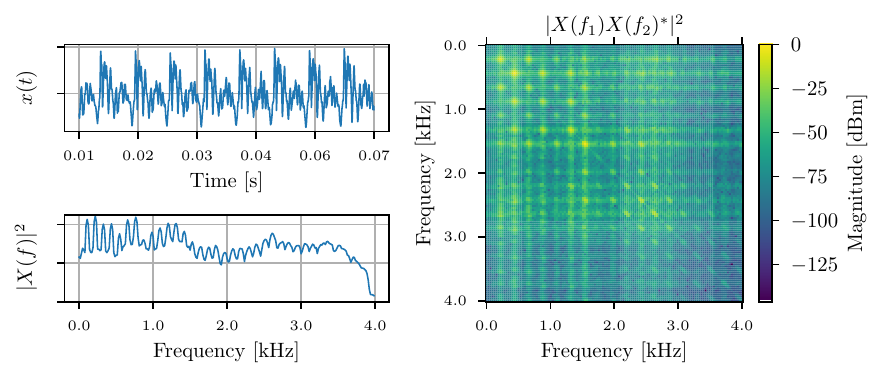}
    \caption{
    The /ä/ phoneme uttered by a male speaker. The top left plot depicts the waveform, while the bottom left plot shows the power spectral density (PSD). The peaks in the PSD are found at integer multiples of the fundamental frequency (harmonics).
    The right plot shows the spectral correlation or bifrequency spectrum. The grid-like structure of peaks in the bifrequency spectrum, whose spacing is proportional to the fundamental frequency, indicates a correlation between harmonic components \cite{madisetti_cyclostationary_2009}.
    }
    \label{fig:waveform}
\end{figure}

However, the spectral uncorrelation assumption is frequently violated in practice.
The STFT coefficients of the signals in neighboring frequency bands are correlated due to the use of short frame lengths and overlap-add/save techniques.
In wireless communications, non-stationarity might be due to natural phenomena like the Doppler effect or artificial manipulations such as in orthogonal frequency division multiplexing (OFDM) \cite{madisetti_cyclostationary_2009, liu_channel_2014}.
In the audio processing domain, vowels are often modeled as an impulse train filtered by a time-varying linear filter. 
\Cref{fig:waveform} shows the waveform $x(t)$ of the /ä/ phoneme uttered by a male speaker, its power spectral density (PSD), and its bifrequency spectrum.
The bifrequency spectrum approximates $\E{X(f_1)X(f_2)^*}$ for all frequencies $f_1,f_2$, where $\E{\cdot}$ indicates the expected value and $X(f)$ is the Fourier transform of $x(t)$.
The vowel in \Cref{fig:waveform} has a non-diagonal bifrequency spectrum, implying that its frequency components are correlated.
First of all, this is not in line with the typical assumptions being made: estimation of parameters or processes from such an acoustic scene could be impaired.
Secondly, we can conclude that $x(t)$ cannot be modeled as a realization of a WSS process, and the ergodicity assumption does not hold \cite{wolfe_mechanics_2020, dong_characterizing_2017}.
Characterizing the spectral covariance of such a process requires a phase-adjusted estimator, whose details are discussed in this contribution as well.

Empirical studies on human auditory perception consistently highlight the practical importance of spectral correlations in spatial filtering.
These correlations are critical in tasks such as sound localization and speech intelligibility.
For instance, speech intelligibility in noise is influenced by the periodic structure of signals, with harmonically complex tones allowing for easier detection compared to inharmonic noise \cite{gockel_asymmetry_2002}.
Additionally, humans can localize speakers based on spatially aliased measurements, but only when spectrally complex sounds are present \cite{moore_introduction_2012, trahiotis_lateralization_1989}.
Dmochowsky \etal proved that spatial aliasing, a common issue in narrowband signals \cite{mccowan_2001_robust}, has reduced impact when the signals are wideband, regardless of the spatial sampling period \cite{dmochowski_spatial_2009}.
Despite the compelling evidence of the relevance of wideband patterns, traditional channel estimation algorithms have rarely considered them explicitly.

Therefore, this paper aims to investigate the impact of spectral correlations on the channel estimation task.
Our contributions are twofold:
Firstly, we propose an RTF estimation technique based on subspace analysis that exploits spectral and spatial correlations.
This technique consistently outperforms the narrowband maximum-likelihood estimator (MLE), known as covariance whitening (CW) \cite{markovich_multichannel_2009,markovich-golan_performance_2015,markovich-golan_performance_2018,varzandeh_iterative_2017}, when the target exhibits spectral correlations.
Secondly, we derive conditional and unconditional CRBs for the RTF estimation task.
To the best of our knowledge, bounds for the RTF estimation task have not been derived before, not even for the narrowband scenario.
The bounds show that channel estimation can be conducted more accurately if the target or the additive noise presents inter-frequency correlations.
Our findings align with experiments showing that both parametric methods and methods based on deep neural network (DNN) for speech enhancement, which jointly process spectral information, outperform their counterparts that process each frequency bin independently \cite{benesty_bifrequency_2012, huang_minimum_2014, tan_neural_2022, tesch_insights_2023}.
Although the accuracy of the proposed algorithm is generally close to the bound, there is some room for improvement, especially when noise signals with high spectral correlation are present.
\blue{An additional contribution is that, in the spirit of reproducible research, a Python implementation is freely available online\footnote{\url{https://github.com/Screeen/SVD-direct}}.}

The article details the signal model in \Cref{sec::sig_mod}. 
In \Cref{sec:estimation_cov_matrices}, we demonstrate how to recover the spectral-spatial covariance matrix of the source at the receivers, and introduce two related RTF estimation methods.
Based on these results, we propose a novel algorithm for RTF estimation in \Cref{sec::ch_est}. To better assess the algorithms' performance, we compare them to the lower bounds on the variance of RTF estimation, which are derived in \Cref{sec::crb}.
Numerical evidence of the superiority of the proposed algorithm, especially when the target presents spectral correlation, is provided in \Cref{sec:experiments}. 
In \Cref{sec:discussion}, we present additional discussion and insights on the experiments.
Finally, some conclusions are drawn in \Cref{sec:conclusions}, summarizing the essential findings and contributions of this paper.

\section{Signal model}\label{sec::sig_mod}\noindent
In a reverberant and noisy environment, we consider the case of a single point source impinging on an array of $M \geq 2$ sensors.
The signal received by the array is given in the STFT domain as:
\begin{align}\label{eq::sig_model}
\v{x}_k(l) &= \v{d}_k(l) + \v{v}_k(l) = s_k(l)\v{a}_k + \v{v}_k(l) \in \mathbb{C}^{M},
\end{align}
where $\v{d}_k(l) = {s}_k(l)\v{a}_k$ is the target signal at the receiver, $l=1,\dots,L$ is the time-frame index and the subscript $k=1,\dots,K$ denotes the frequency bin index.
The STFT coefficients of the target signal at the source are modeled by ${s}_k(l)$, which are realizations of complex random variables with zero mean.
The target coefficients are \emph{not} assumed to be mutually independent over frequency.
They can follow any probability distribution.
The transfer function $\v{a}_k \in \mathbb{C}^{M}$ models the wave propagation from the target point source to the $M$ sensors.
The transfer function is assumed to be an unknown deterministic quantity that typically needs to be estimated in beamforming applications.
The noise coefficients $\v{v}_k(l)$ are also modeled as complex random variables with zero mean and an arbitrary probability distribution.

Let us now consider the coefficients for all frequency components jointly.
Noisy coefficients corresponding to a single time frame $l$, for $M$ sensors, at $K$ frequencies, can be stacked in a column vector as in
$
\v{x} = \begin{bmatrix}
           \v{x}_1^T,
           \v{x}_2^T,
           \dots,
           \v{x}_K^T
         \end{bmatrix}^T \in \mathbb{C}^{KM}.
 $
The time-frame index $l$ is left out for notational convenience.
In a similar fashion, noise vectors $\v{v}_k$, transfer function vectors $\v{a}_k$ and desired signal $\v{d}_k$ can be stacked vertically to form $\v{v}$, $\v{a}$, and $\v{d}$, respectively, so that $\v{x} = \v{d} + \v{v}$.
In this case, it is helpful to collect the signal coefficients $s_k$ in a random vector
$
\v{\sbeforekron} = [s_1, s_2, \ldots, s_K]^T.
$
Let us also define
$
    \v{s} = \v{\sbeforekron} \kron \v{1}_M = [s_1 \v{1}_M^T, s_2 \v{1}_M^T, \ldots, s_K \v{1}_M^T]^T,
$
where $\kron$ is the Kronecker product and $\v{1}_M$ is the $M$-dimensional all-ones vector.
Next, let
\begin{equation}\label{eq:A_def}
\v{A}=\diagp{\v{a}}=\diag(a_{11}, \ldots, a_{1M}, a_{21}, \ldots a_{KM}),
\end{equation}
contain the transfer functions for all frequencies and sensors.
The vector of desired signals is then given by
\begin{align}\label{eq:desired_sig}
\v{d} = \v{A}\v{s} = \v{A}(\v{\sbeforekron} \kron \v{1}_M),
\end{align}
such that the noisy coefficients for the wideband model can be written as
\begin{equation}\label{eq::sig_model_wideband}
    \v{x} = \v{d} + \v{v} = \v{A}\v{s} + \v{v}. 
\end{equation}
\blue{
Notice that \Cref{eq::sig_model_wideband} generalizes the narrowband model with multiplicative transfer function (MTF) approximation (\Cref{eq::sig_model}) to a wideband scenario.
In the MTF approximation, the linear convolution in the time domain is represented as multiplication in the STFT domain \cite{gannot_consolidated_2017}.
This constrains the transfer functions $\v{a}_k$ to be at most $K$ samples long in the time domain, effectively capturing the early reflections only, and neglecting the late reverberation components.
}

Next, we model the spatial and spectral correlations between the signals.
Spatial correlation matrices are widely used in array processing to model relations between signals received at different sensors.
Here, we also consider \textit{spectral} correlations between different frequency components.
The spectral-spatial covariance matrix $\v{R}_x = \E{\v{x}\v{x}^H} \in \mathbb{C}^{KM \times KM}$, can be expressed as
\begin{align}
    &\v{R}_x =
    \begin{bmatrix}
    \v{r}_x(1, 1) & \v{r}_x(1,2) & \cdots & \v{r}_x(1,K)\\
    \v{r}_x(2, 1) & \v{r}_x(2, 2) & \cdots & \\
    \vdots & \vdots & \ddots & \vdots\\
    \v{r}_x(K, 1) & \v{r}_x(K,2) & \cdots & \v{r}_x(K,K)\\
  \end{bmatrix},
 \end{align}
where $(\cdot)^H$ indicates the conjugate transpose operation, and
$
    \v{r}_x(i, j) = \E{\v{x}_{i} \v{x}^H_{j}} \in \mathbb{C}^{M \times M}
$
is the spectral-spatial covariance matrix at two arbitrary frequencies $i,j$.
When noise and target signal are statistically uncorrelated, we have
$
    \v{R}_x = \v{R}_d + \v{R_v},
$
that is,
$
    \v{r}_x(i, j) =
    \E{s_i s_j^*}\v{a}_i\v{a}_j^H + \E{\v{v}_i\v{v}_j^H}.
$
Let us now introduce alternative formulations of the covariance matrices that will be useful for our analysis.
Using the definition in \Cref{eq:desired_sig}, the signal covariance matrix $\v{R}_d = \E{\v{d}\v{d}^H}$ 
can be expressed as
\begin{align}
\scalemath{1.}{
    \v{R}_d = \E{\v{A}\v{s}\v{s}^H\v{A}^H}
    = \v{A}\E{\v{s}\v{s}^H}\v{A}^H
    = \v{A}\v{R}_s\v{A}^H, \label{eq:desired_cov_receiver}
    }
\end{align}
where $\v{R}_s$ is defined as $\v{R}_s = \E{\v{s}\v{s}^H}$.
Using the properties of the Kronecker product, the covariance matrix $\v{R}_s$ can, in turn, be rewritten as
    \begin{align}
    \v{R}_s &= \E{(\v{\sbeforekron} \kron \v{1}_M)(\v{\sbeforekron} \kron \v{1}_M)^ H}
    = \E{\v{\sbeforekron}\v{\sbeforekron}^H \kron \v{1}_M\v{1}_M^H} \nonumber \\
    &= \E{\v{\sbeforekron}\v{\sbeforekron}^H} \kron \v{1}_{M\times M}
    = \v{R}_{\sbeforekron} \kron \v{1}_{M\times M}, \label{eq:rs_kron}
\end{align}
where $\v{1}_{M\times M}$ is the all-ones matrix of size ${M\times M}$ and
\begin{equation}\label{eq:rs_before_kron}
    \v{R}_{\sbeforekron} = \E{\v{\sbeforekron}\v{\sbeforekron}^H} \in \mathbb{C}^{K\times K}.
\end{equation}

\section{Background information}\label{sec:estimation_cov_matrices}\noindent
\def\figphasewidth{0.22\linewidth}
\begin{figure*}[t]
\vspace{-2mm}
    \centering
    \subfloat[]{%
    \includegraphics[width=\figphasewidth]{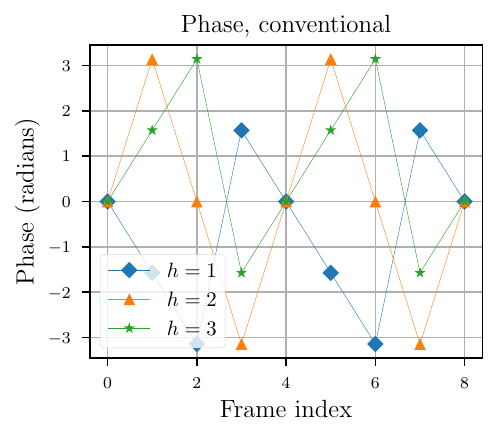}\label{fig:phase_corr_ph_orig}}
    \hfill%
    \subfloat[]{%
    \includegraphics[width=\figphasewidth]{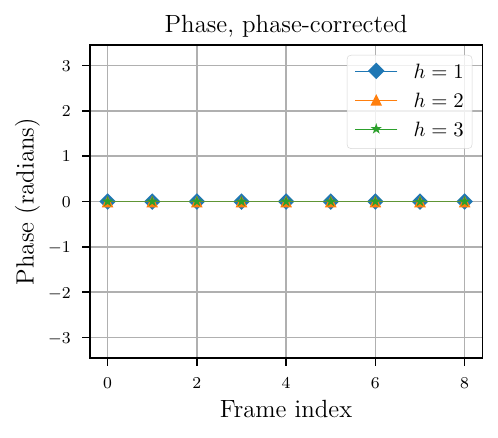}\label{fig:phase_corr_ph_corr}}
    \hfill%
    \subfloat[]{%
    \includegraphics[width=\figphasewidth]{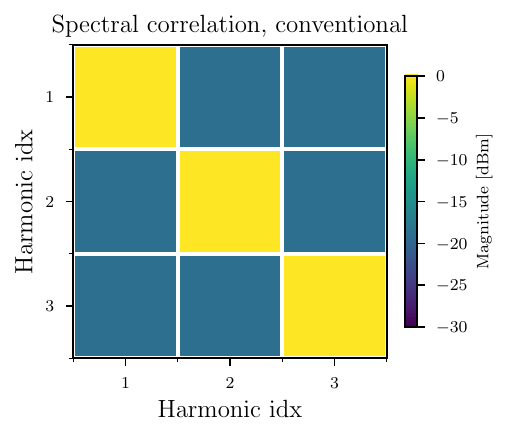}\label{fig:phase_corr_sc_orig}}
    \hfill%
    \subfloat[]{%
    \includegraphics[width=\figphasewidth]{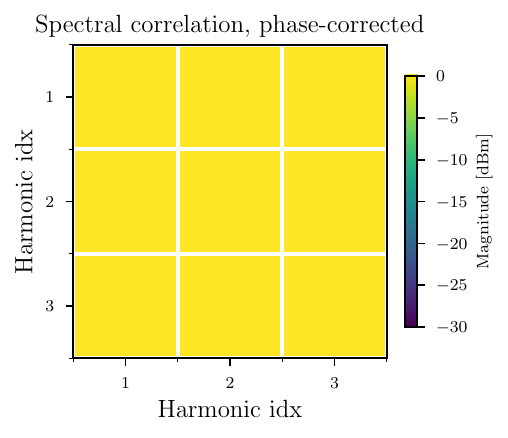}\label{fig:phase_corr_sc_corr}}
    \hfill%
  \caption{Phase correction improves estimation of the spectral covariance. (a) Phase components of the STFT of the original signal $y_k(l)$; (b) spectral covariance of $y_k(l)$; (c) phase components of the STFT of the phase-corrected signal $y_k^c(l)$; (d) spectral covariance of $y_k^c(l)$.}
  \label{fig:phase_corr}
  \vspace{-5mm}
\end{figure*}
This section begins by reporting a strategy to estimate sample spectral-spatial covariance matrices.
It then demonstrates that the desired signal covariance matrix $\v{R}_d$ is singular, with its rank being limited by the number of frequency components $K$.
\Cref{ssec:gevd} explores how the eigenvectors of $\v{R}_x$ and $\v{R}_d$ are affected by additive noise, and it reports a strategy for recovering $\v{R}_d$.
Finally, two well-known algorithms for RTF estimation are introduced.
\subsection{Phase-adjusted sample covariance matrix}\label{ssec:phase_adjust}\noindent
The commonly used sample covariance matrix estimate, serving as the MLE for jointly Gaussian WSS data, is expressed as 
\begin{equation}\label{eq:sample_cov_matrix}
    \tilde{\v{R}}_x = \frac{1}{L} \sum_{l=1}^{L}\v{x}(l) \v{x}(l)^H,
\end{equation} where $l$ is the realization index. 
Alternatively, $l$ can be treated as a time-frame index assuming second-order ergodicity.

However, when spectral correlations are present, the WSS assumption becomes inaccurate, requiring an alternative estimator for the spectral-spatial covariance matrices. 
In the estimation of \emph{spectral} correlations from STFT data, it is crucial to establish a connection among phase components across all frames and frequencies. 
In most implementations of the STFT, phase components are linked to the beginning of each frame. 
Therefore, there is a need to connect these phase components to a common reference point, such as the signal's onset, as mentioned by Antoni \cite{antoni_fast_2017}.
The phase-adjusted noisy STFT data at frequency $k$ is given by:
\begin{equation}\label{eq:phase_corr}
    \mathbf{x}_k^c(l) = \mathbf{x}_k(l) e^{-j 2 \pi l R k / K}, \quad l = 1, \ldots, L,
\end{equation}
where $R$ denotes the block shift between frames.

\blue{
Let us examine the impact of phase correction through an example. Consider a harmonic signal of the form $y(t) = \sum_{h=1}^3 \cos(2 \pi f_0 h t),~t \in \mathbb{N}$, where $f_0$ is the fundamental frequency in normalized units, and $h$ denotes the harmonic index.
The harmonic components at frequencies $f_0 h$ for $h=1,2,3$ are deterministic, thus perfectly correlated, meaning that knowing one component allows us to infer the value of another.
In the STFT domain, we denote the harmonic signal as $y_k(l),~l=1,\ldots,L$, and its phase-corrected version as $y_k^c(l)$.
The overlap of the STFT is set to $75\%$, corresponding to $R = K / 4$.
\Cref{fig:phase_corr_ph_orig} shows the phase components of the three non-zero frequency components of $y_k(l)$ across time frames.
Due to the misalignment between the block-shift $R$ and the periodicities of $y(t)$, the phases of the harmonics components appear to change randomly from frame to frame. 
However, after applying phase correction to get $y_k^c(l)$, we can accurately determine the phase of all components (\Cref{fig:phase_corr_ph_corr}) with respect to the time origin, $t=0$.
Let us also analyze the impact of phase correction on the estimation of the spectral correlations.
For the phase-corrected signal $y_k^c(l)$ (\Cref{fig:phase_corr_sc_corr}), the spectral correlation is maximal across all components, while the original $y_k(l)$ signal incorrectly appears to exhibit a lower spectral correlation due to spurious effects of phase cancellation (\Cref{fig:phase_corr_sc_orig}).
}

The phase correction becomes superfluous when dealing with products of components at the same frequency, as the conjugation leads to the cancellation of the phase term: $\mathbf{x}_k^c(l)\mathbf{x}_k^c(l)^H = \mathbf{x}_k(l)\mathbf{x}_k(l)^H$.
Similarly, the exponential term in \Cref{eq:phase_corr} is identical to one, thus ineffective, when $R=K$, \ie when adjacent frames do not overlap, or when independent realizations of the signals are used.
Therefore, the correction of \Cref{eq:phase_corr} is applied solely in \Cref{sec:real_speech_exp,sec:corr_analysis} to the overlapping STFT frames of real speech signals before covariance matrix estimation, so that, for $k_1, k_2 = 1, \ldots, K$,
\begin{align}
\label{eq:phase_corr_cov}
\begin{split}
 \hat{\v{r}}_x(k_1, k_2) &= \frac{1}{L}\sum_{l=1}^{L}\v{x}_{k_1}^c(l) \v{x}^c_{k_2}(l)^H \\
&= \frac{1}{L}\sum_{l=1}^{L}\v{x}_{k_1}(l) \v{x}_{k_2}(l)^H e^{-j 2 \pi l R (k_1 - k_2) / K}.
\end{split}
\end{align}

\subsection{Upper bound on the rank of target covariance matrix}
\begin{lemma}\label{lem::rank}
$\rank{\v{R}_d} \leq K$
\end{lemma}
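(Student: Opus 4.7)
The plan is to exploit the factored structure of $\v{R}_d$ inherited from the signal model, together with the fact that the source random vector $\v{\sbeforekron}$ lives in a space of dimension only $K$ (rather than $KM$). Two routes are natural, and I would present the more elementary one first.

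First, I would use equations (4) and (5) directly. From \Cref{eq:desired_cov_receiver}, $\v{R}_d = \v{A}\v{R}_s\v{A}^H$, and from \Cref{eq:rs_kron}, $\v{R}_s = \v{R}_{\sbeforekron} \kron \v{1}_{M \times M}$. Since $\v{1}_{M \times M} = \v{1}_M \v{1}_M^H$ is manifestly a rank-one matrix, and Kronecker products satisfy $\rank{\v{P} \kron \v{Q}} = \rank{\v{P}}\,\rank{\v{Q}}$, I obtain
\begin{equation*}
\rank{\v{R}_s} = \rank{\v{R}_{\sbeforekron}} \cdot 1 \leq K.
\end{equation*}
Submultiplicativity of rank under matrix products then yields $\rank{\v{R}_d} = \rank{\v{A}\v{R}_s\v{A}^H} \leq \rank{\v{R}_s} \leq K$, which is the claim.

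As an equivalent but more transparent route, I would rewrite $\v{s} = \v{\sbeforekron} \kron \v{1}_M = (\v{I}_K \kron \v{1}_M)\v{\sbeforekron}$, so that
\begin{equation*}
\v{d} = \v{A}(\v{I}_K \kron \v{1}_M)\v{\sbeforekron} =: \v{B}\v{\sbeforekron},
\end{equation*}
where $\v{B} \in \mathbb{C}^{KM \times K}$. Taking expectations gives $\v{R}_d = \v{B}\v{R}_{\sbeforekron}\v{B}^H$, and since $\v{B}$ has only $K$ columns, $\rank{\v{R}_d} \leq \rank{\v{B}} \leq K$. This view also makes the bound intuitive: each frequency bin of the source contributes a single degree of freedom to $\v{d}$, irrespective of the number $M$ of sensors that observe it.

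There is no genuine obstacle here; the argument is a routine application of rank inequalities. The only thing worth being careful about is invoking the Kronecker rank identity (or equivalently, recognizing that spreading a scalar source across $M$ sensors does not create new degrees of freedom), so I would state it explicitly rather than assuming it implicitly.
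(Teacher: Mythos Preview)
Your proposal is correct, and your first route is essentially identical to the paper's own proof: both invoke the Kronecker rank identity on $\v{R}_s = \v{R}_{\sbeforekron}\kron\v{1}_{M\times M}$ together with submultiplicativity of rank for the product $\v{A}\v{R}_s\v{A}^H$. Your second route, factoring $\v{d} = \v{B}\v{\sbeforekron}$ with $\v{B}\in\mathbb{C}^{KM\times K}$, is a slightly more direct variant that bypasses the Kronecker rank identity altogether and makes the ``$K$ degrees of freedom'' interpretation explicit; the paper does not present this, but it is equivalent and arguably cleaner.
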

\begin{proof}
To support this claim, we first state two well-known properties of the matrix rank.
Consider two matrices $\v{X} \in \mathbb{C}^{m\times n}$ and $\v{Y} \in \mathbb{C}^{n\times p}$.
According to \cite{golub_matrix_1983}, we have that:
\begin{gather}
    \rank{\v{X}\v{Y}} \leq \operatorname{min}({\rank{\v{X}}, \rank{\v{Y}}}), \label{eq:rank_product_prop} \\
    \rank{\v{X} \kron \v{Y}} = \rank{\v{X}} \rank{\v{Y}}. \label{eq:rank_kronecker}
\end{gather}
The covariance matrix $\v{R}_{\sbeforekron} = \E{\v{\sbeforekron}\v{\sbeforekron}^H}$ in \Cref{eq:rs_before_kron} obeys $\rank{\v{R}_{\sbeforekron}} \leq K$.
The rank of the all-one matrix, instead, is $\rank{\v{1}_{M\times M}} = 1$.
From \Cref{eq:rs_kron} and the rank property of Kronecker products in \Cref{eq:rank_kronecker} it follows that
\begin{align}\label{eq:rank_rs}
\scalemath{1.}{
    \rank{\v{R}_s} = \rank{\v{R}_{\sbeforekron} \kron \v{1}_{M\times M}} = \rank{\v{R}_{\sbeforekron}} \leq K.}
\end{align}
\blue{Moreover, let at least $K$ of the coefficients of the diagonal RTF matrix $\v{A}$ be non-zero by assumption, so that $\rank{\v{A}} \geq K$.}
It is now possible to analyze the matrix rank of $\v{R}_d$:
\begin{align}
    \rank{\v{R}_d} &=
    \rank{\v{A}\v{R}_s\v{A}^H} \\
    &\leq \operatorname{min}{\left(\rank{\v{A}}, \rank{\v{R}_s}\right)} 
    \leq K,
\end{align}
where the inequality follows from the rank matrix product property in \Cref{eq:rank_product_prop} and \Cref{eq:rank_rs}.
This completes the proof.
\end{proof}
\subsection{Estimation of the target covariance matrix}\label{ssec:gevd}\noindent
Suppose that $\v{R}_x$ is known, and 
\blue{let $\noisevar^2$ be the noise variance.
Estimated quantities are denoted as $\hat{(\cdot)}$. For example, the estimated noise variance is represented as $\hat{\noisevar}^2$.}
Assuming that the noise exhibits uniform power across both space and frequency, remaining uncorrelated in both domains, we have
$
\v{R}_v = \noisevar^2 \v{I}_{KM}.
$
As the identity matrix is diagonalizable by any unitary matrix,
\begin{equation*}
\scalemath{1.}{
    \v{R}_x = \v{R}_d + \noisevar^2 \v{I} = \v{V}\v{\Lambda}\v{V}^H + \noisevar^2 \v{I} = \v{V}(\v{\Lambda} + \noisevar^2 \v{I})\v{V}^H,
    }
\end{equation*}
where $\v{V}$ is the eigenvector matrix of $\v{R}_d$, and $\v{\Lambda}$ is the diagonal matrix containing the eigenvalues of $\v{R}_d$.
Therefore, if the estimated, phase-adjusted sample covariance matrix is decomposed as $\hat{\v{R}}_x = \hat{\v{V}}\hat{\v{\Lambda}}\hat{\v{V}}^H$, the covariance matrix of the target at the sensors can be approximated by $\hat{\v{R}}_d = \hat{\v{V}} \,\text{max}(\hat{\v{\Lambda}} - \hat{\noisevar}^2 \v{I}, 0) \hat{\v{V}}^H$,
where the $\text{max}(\cdot,\cdot)$ operator forces the eigenvalues of the Hermitian positive semidefinite (HPSD) matrix $\hat{\v{R}}_d$ to be non-negative.

If spatially or spectrally colored noise is present, the eigenvectors of $\v{R}_x$ and $\v{R}_d$ will differ.
However, estimating $\hat{\v{R}}_d$ and computing its eigenvalue decomposition is still possible if an estimate of the noise covariance matrix $\hat{\v{R}}_v$ is available and \blue{it is full-rank, hence invertible.
To ensure that this requirement is satisfied, we apply \emph{diagonal loading}, which consists of adding a scaled identity matrix to the estimated noise covariance matrix: $\hat{\v{R}}_v \leftarrow \hat{\v{R}}_v + \epsilon \v{I}$, where $\epsilon$ is a small positive value.
}
The clean covariance matrix $\hat{\v{R}}_d$ can be estimated from the generalized eigenvalue decomposition (GEVD) of $\hat{\v{R}}_x$ and $\hat{\v{R}}_v$ or from the eigenvalue decomposition of the prewhitened noisy covariance matrix $\hat{\v{R}}_v^{-1/2} \hat{\v{R}}_x \hat{\v{R}}_v^{-1/2}$.
The present examination will be limited to the GEVD because the two procedures are theoretically equivalent \cite{doclo_robust_2003, nadakuditi_fundamental_2010}\footnote{A standard routine for computing the GEVD of HPSD matrices is based on Cholesky decomposition \cite[Algorithm 8.7.1]{golub_matrix_1983}.
It is used in the popular LAPACK drivers \cite{lapack} that are the backbone of Matlab and Numpy/Scipy.}.
Given the estimates $\hat{\v{R}}_x$ and $\hat{\v{R}}_v$, an estimate of the desired covariance matrix $\hat{\v{R}}_d$ can be obtained as follows:
\begin{enumerate}
    \item Computation of $\hat{\v{R}}_x \v{U} = \hat{\v{R}}_v \v{U} \v{D}$ or, equivalently,
    $
        \v{Q}^H \hat{\v{R}}_x = \v{D} \v{Q}^H \hat{\v{R}}_v,
    $
    where $\v{D}$ are the generalized eigenvalues, $\v{U}$ are the right generalized eigenvectors, $\v{Q}$ are the left generalized eigenvectors, and $\v{U} = \v{Q}^{-H}$.
    \item Partitioning of the left eigenvectors $\v{Q} = [\v{Q}_x\ \v{Q}_v]$, where $\v{Q}_x$ comprises of the first $K_d$ columns of $\v{Q}$.
    \item Estimation of $\hat{\v{R}}_d$ as
    $\hat{\v{R}}_d = \v{Q}_x \,\text{max}(\v{D}_x - \v{I}, 0) \v{Q}_x^H$,
\end{enumerate}
\blue{
where $K_d$ is the estimated rank of $\v{R}_d$, and $\v{D}_x$ is a diagonal subblock formed by the first $K_d$ columns and rows of $\v{D}$.
By virtue of \Cref{lem::rank}, $K_d \leq K$.
The number of frames $L$ available for estimating the covariance matrices also constrains the maximum possible matrix rank, such that $K_d \leq L$.
As a consequence, in steps 2) and 3), $K_d = \min{(K, L)}$ eigenvalue-eigenvector pairs are retained.
Note that, due to the sparse spectral distribution of speech, the actual rank of $\v{R}_d$ might be lower than $K_d$. 
Specifically, since many frequency components of speech signals are zero, the corresponding rows and columns in $\v{R}_d$ will also be zero, reducing the rank of the matrix.
}

\blue{
\subsection{Covariance whitening and covariance subtraction}\label{sec::cw}\noindent
The GEVD routine detailed in \Cref{ssec:gevd} is also widely used in traditional, narrowband processing for estimating the target spatial covariance matrix. 
It is indeed at the core of the covariance whitening (CW) algorithm, one of the most effective techniques for RTF estimation \cite{markovich_multichannel_2009,markovich-golan_performance_2015,markovich-golan_performance_2018,varzandeh_iterative_2017}.
Let the (narrowband) noisy spatial covariance matrix be represented by $\v{R}_x(k) = \E{\v{x}_k(l) \v{x}_k(l)^H} \in \mathbb{C}^{M \times M}$ and the noise spatial covariance matrix by $\v{R}_v(k)$.
The CW technique consists of estimating the generalized left eigenvectors of $(\v{R}_x(k), \v{R}_v(k))$ for each discrete frequency $k$, and then retaining the eigenvector corresponding to the largest eigenvalue. 
Assuming that a single speaker is present, the rank of $\v{R}_d(k) =
\E{|s_k|^2}\v{a}_k\v{a}_k^H $ is 1. Therefore, the principal eigenvector equals the target RTF $\v{a}_k$ up to a multiplicative factor.
}

\blue{
Covariance subtraction (CS) is another popular technique for RTF estimation.
CS estimates the target spatial covariance matrix by subtracting the noise covariance matrix from the observed covariance matrix, \ie $\v{R}^{\text{CS}}_d(k) = \v{R}_x(k) - \v{R}_v(k)$.
The RTF is then estimated from the principal eigenvector of $\v{R}^{\text{CS}}_d(k)$.
This simpler technique is generally less accurate then CW \cite{markovich-golan_performance_2018}.
}
\section{Proposed RTF estimation algorithm: SVD-direct}\label{sec::ch_est}\noindent
In the preceding sections, the investigation focused on the spectral-spatial covariance matrix of a noisy signal received from multiple sensors.
The knowledge gained from this investigation can be applied to estimate the channel $\v{a}$, provided that estimates of the spectral-spatial covariance for both the noisy signal $\v{\hat{R}}_x$ and the noise-only signal $\v{\hat{R}}_v$ are available.
To this aim, we introduce a new method for RTF estimation.
The proposed algorithm is based on a row partitioning of the estimated spectral-spatial covariance $\v{\hat{R}}_d$, followed by an SVD on each frequency subblock.
The approach is named \textbf{SVD-direct} to emphasize the simplicity of its implementation and the central role played by the singular value decomposition.
\blue{
The proposed method extends the CW technique (\Cref{sec::cw}) to a wideband scenario, thus leveraging inter-frequency correlations for better estimation accuracy.
Unlike CW, multiple frequency components are processed simultaneously both in the prewhitening and in the ensuing decomposition step.
}

The basic idea of the proposed RTF estimation algorithm can be explained by an example. First, let us introduce a simplified case with $K=2$ frequency components, to gain some intuition on the structure of $\v{R}_d = \v{A}\v{R}_s\v{A}^H$.
We have that
\begin{align}
    \v{R}_d &=
    \begin{bmatrix}
    \E{|s_1|^2}\v{a}_1\v{a}_1^H & \E{s_1 s_2^*}\v{a}_1\v{a}_2^H \\
    \E{s_2 s_1^*}\v{a}_2\v{a}_1^H & \E{|s_2|^2}\v{a}_2\v{a}_2^H
    \end{bmatrix} =\\
  &=
  \begin{bmatrix}
  \sigma_{1}^2\v{a}_1\v{a}_1^H & \sigma_{12}    \v{a}_1\v{a}_2^H \\
    \sigma^*_{12}\v{a}_2\v{a}_1^H & \sigma_{2}^2\v{a}_2\v{a}_2^H
    \end{bmatrix}
    =
\begin{bmatrix}
    \v{R}_d^{(1)} \\
    \v{R}_d^{(2)}
\end{bmatrix}
    , \label{eq:rd1_rd2}
\end{align}
where we have introduced the auxiliary variables
$
\sigma_{1}^2 = \E{|s_1|^2},\  \sigma_{2}^2 = \E{|s_2|^2},\ \sigma_{12} = \E{s_1 s_2^*}
$
to simplify the notation.
The transfer function for the $i$th frequency is $\v{a}_i \in \mathbb{C}^{M}$.
We also defined the block-matrices
$
\v{R}_d^{(1)}, \v{R}_d^{(2)}  \in \mathbb{C}^{M\times2M}
$.
The absence of spectral correlations in the source signal $\v{s}$ would lead to $\E{s_1 s_2^*} = \E{s_2 s_1^*} = 0$.
Now, consider the block matrix
$
\v{R}_d^{(1)} =
\begin{bmatrix}
  \sigma_{1}^2\v{a}_1\v{a}_1^H & \sigma_{12}    \v{a}_1\v{a}_2^H
\end{bmatrix}
$
in \Cref{eq:rd1_rd2}.
Notice that $\v{R}_d^{(1)}$ is a rank-1 matrix, whose left principal singular vector is \blue{proportional to} $\v{a}_1$. %
The right principal singular vector of $\v{R}_d^{(1)}$ is proportional to $[\v{a}_1^T\ \v{a}_2^T]^T$.
To see this, consider the matrix product
\begin{align}
    \v{R}_d^{(1)}(\v{R}_d^{(1)})^H =
     (\sigma_{1}^2 \|\v{a}_1\|^2 + \sigma_{12}^2 \|\v{a}_2\|^2)\ \v{a}_1\v{a}_1^H 
\end{align}
from which it follows that $\v{R}_d^{(1)}(\v{R}_d^{(1)})^H$ is a rank-1 matrix with principal eigenvector $\v{a}_1$\footnote{Throughout the paper, $\|\cdot\|$ indicates the 2-norm.}.
It follows that by decomposing $\v{R}_d^{(1)}$ with an SVD and selecting the principal left singular component, $\v{a}_1$ can be recovered up to a scalar factor. 

The procedure above can be repeated for each subblock $\v{R}_d^{(k)} \in \mathbb{C}^{M\times KM},\ k=1,\dotsc, K$, leading to the proposed wideband channel estimation method, \textbf{SVD-direct} (\Cref{alg:svd-direct}).
\begin{algorithm}[t]
\caption{SVD-direct}\label{alg:svd-direct}
\begin{algorithmic}
\Require{$\hat{\v{R}}_x, \hat{\v{R}}_v, M, K$}
\Ensure{RTF estimates $\hat{\v{a}}$}
\texttt{\\}
\item \# Estimate $\hat{\v{R}}_d$ from the GEVD (\Cref{ssec:gevd}).
\State
$
     \hat{\v{R}}_d \gets
    \texttt{GEVD\_routine}
    (\hat{\v{R}}_x, \hat{\v{R}}_v)
 $
\texttt{\\}
\item{\# Partition in $K$ ``fat" $M\times KM$ blocks}
    \State
    $
     [(\hat{\v{R}}_{d}^{(1)})^T,
    (\hat{\v{R}}_{d}^{(2)})^T,
    \dots,
    (\hat{\v{R}}_{d}^{(K)})^T]^T \gets
    \hat{\v{R}}_{d}
    $
\texttt{\\}
\item \# {Per each frequency}
\For{$k = 1, \dotsc, K$}
    \State
    $
        \v{P}^{(k)}\v{D}^{(k)}\v{Q}^{(k)} \gets \texttt{SVD}( \hat{\v{R}}_{d}^{(k)})
    $
    \texttt{\\}
        \State \# {Rescale left principal singular vectors}
    \State $
    \v{\hat{a}}^{(k)} \gets \texttt{Normalize}({\v{p}_1^{(k)}}).
    $
\EndFor
\end{algorithmic}
\end{algorithm}
\noindent The function $\texttt{Normalize}$ is defined as $\texttt{Normalize}(\v{a}^{(k)}) = {\v{a}}^{(k)} / [{\v{a}}^{(k)}]_r$, and $[{\v{a}}^{(k)}]_r$ is the entry corresponding to the $r$-th (reference) sensor.

\section{Cramér--Rao lower bound}\label{sec::crb} \noindent
Based on the spectral-spatial covariance matrix of the signal received at the multiple sensors, we derived an algorithm for RTF estimation, taking correlation across frequency into account.
To determine how close this algorithm is to the optimal performance, we compare it to the CRB.

In the following, we first define the CRB and show how to derive it when estimating a deterministic function of an unknown parameter.
The CRB is then calculated for two scenarios: ($i$) a setting where the target signal $\v{s}(l)$ is deterministic and known (\Cref{sec:crb_cond}), and ($ii$) a scenario where the target signal has a known covariance matrix $\v{R}_s$, but the signal realizations are unknown (\Cref{sec:crb_uncond}).
Note that the former bound will lead to an unrealistic lower bound, as in the current scenario, $\v{s}(l)$ is never known.
The latter bound is realistic as it only assumes that the first- and second-order statistics are known.
The two settings are also known as the deterministic or conditional CRB, and stochastic or unconditional CRB, respectively \cite{stoica_performance_1990}.
Although the CRBs are derived for the wideband scenario, they encompass the bounds for narrowband RTF estimation as a specific case.

It is worth noting that the CRB for proper complex-valued multivariate Gaussian parameters has been previously explored. 
In \cite[Eq.\, 15.52]{kay_fundamentals_1993}, an approach that treats the real and imaginary components of the parameters independently was adopted. 
Conversely, in \cite[Eq.\, 6.55]{peter_j_schreier_statistical_2010}, the Wirtinger derivatives were employed.
However, neither of these references extends its analysis to incorporate further deterministic transformations.

\subsection{Problem formulation}\noindent
Let us consider the case where the parameters $\v{\theta}$ to be estimated are complex-valued, deterministic but unknown, and the observed data matrix is $\v{X} = [\v{x}(1) \, \ldots \, \v{x}(L)]$.
The distribution of the observed data is $p(\v{X}; \v{\theta})$.
The Fisher information matrix (FIM) is found as the negative expected Hessian of the log-likelihood function:
\begin{align}\label{eq::fim}
    \v{I}_{\v{\theta}} =
    - \E{\grad_{\v{\theta}}\grad^H_{\v{\theta}} \ln{p(\v{X}; \v{\theta})}}
    =
    - \E{\grad_{\v{\theta}}^2 \ln{p(\v{X}; \v{\theta})}},
\end{align}
where the expectation is taken with respect to $p(\v{X}; \v{\theta})$.
The gradient and the Hessian are defined as
\begin{equation*}
[\grad_{\v{\theta}}f]_i={\partial f}/{\partial \theta_i}, \qquad
[\grad_{\v{\theta}}^2 f]_{ij}={\partial^2 f}/{\partial \theta_i \partial \theta_j^*},
\end{equation*}
and the partial derivatives are Wirtinger derivatives \cite{brandwood_complex_1983}.
The covariance matrix $\v{R}_{\hat{\v{\theta}}}$ of any unbiased estimator $\hat{\v{\theta}}$ of ${\v{\theta}}$ satisfies $\v{R}_{\hat{\v{\theta}}} \succeq \v{I}_{\v{\theta}}^{-1}$\,\footnote{$\v{A} \succeq \v{B}$ means $\v{A} - \v{B}$ is positive semidefinite with $\v{A}$ and $\v{B}$ being Hermitian}.
When the quantity to estimate is given by a function $\v{\phi} = \v{g}(\v{\theta})$ of some underlying parameter, the bound follows as \cite{van_den_bos_cramer-rao_1994}
\begin{align}\label{eq:crb_det_function}
    \v{R}_{\hat{\v{\phi}}} \succeq (\grad_{\v{\theta}}\v{g})
    \v{I}^{-1}_{\v{\theta}}
    (\grad^H_{\v{\theta}}\v{g}),
\end{align}
where $\v{R}_{\hat{\v{\phi}}}$ is the covariance matrix of the estimator $\hat{\v{\phi}} = \v{g}(\hat{\v{\theta}})$.

In the present case, we define a function $\v{g} : \mathbb{C}^{2KM} \mapsto \mathbb{C}^{KM}$ that transforms a transfer function to a \textit{relative} transfer function.
It is given by
\begin{align}\label{eq:g_def}
\v{g}(\v{\theta}) &= \v{g}([\v{a}^T\ \v{a}^H]^T)
= \v{a} / \v{a}_{\text{ref}}, %
\end{align}
where the division is intended element-wise and
\begin{align*}
\v{a}_{\text{ref}} = [a_{1r} \v{1}_M^T, a_{2r} \v{1}_M^T, \ldots, a_{Kr} \v{1}_M^T]^T,
\end{align*}
is the vector with the responses of the $r$th (reference) sensor at all frequencies.
Notice that $\v{g}(\cdot)$ corresponds to the $\texttt{Normalize}(\cdot)$ function defined in \Cref{sec::ch_est}, with the only difference that $\v{g}(\cdot)$ acts on transfer functions for all frequencies and sensors simultaneously.
This function can be readily modified to accommodate various strategies for reference sensor selection \cite[Eq. 10]{gannot_consolidated_2017}.

\subsection{Conditional Cramér--Rao bound}\label{sec:crb_cond} \noindent
Consider the model from \Cref{eq::sig_model_wideband}:
\begin{align}\label{eq::sig_model_wideband_repeated}
\v{x}(l) = \v{A}\v{s}(l) + \v{v}(l),  %
\quad l = 1, \dots, L.
\end{align}
Firstly, we analyze the case where the signal $\v{s}(l)$ is known and the absolute transfer function $\v{A}$, defined in \Cref{eq:A_def},
is deterministic but unknown.
The noise $\v{v}(l)$ is a complex circular Gaussian random process with known spectral-spatial covariance $\v{R}_v$.
The vector of unknown parameters is $\v{\theta} = [\v{a}^T \v{a}^H]^T \in \mathbb{C}^{2KM}$.
The observed data $\v{X}$ follows a complex Gaussian distribution so that the log-likelihood is given by
\begin{align}\label{eq::likelihood_cond}
    \begin{split}
        &\ln{p(\v{X}; \v{\theta})} =
        -L\ln{|\pi\v{R}_v|} - {\sum_{l=1}^{L}\v{v}(l)^H \v{R}_v^{-1} \v{v}(l)}. \\
    \end{split}
\end{align}
We have the following result.
\begin{theorem}[Conditional CRB]\label{th:cond}
The variance of any conditional RTF estimator is lower bounded by:
\begin{align}\label{eq::crb_wideband_conditional}
    \text{CRB}[\v{g}(\hat{\v{\theta}})]_i =
    \left[
    (\grad_{\v{a}}\v{g})
    (\v{B}^*)^{-1}
    (\grad^H_{\v{a}}\v{g})
    \right]_{ii}
    ,
\end{align}
for $i=1,\dots,M$, where the matrix $\v{B}$ is defined as $\v{B} = \sum_{l=1}^{L} \v{S}(l)^H \v{R}_v^{-1} \v{S}(l)$ and $\v{S}(l) = \diagp{\v{s}(l)}$.
\end{theorem}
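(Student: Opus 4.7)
The plan is to exploit the fact that, after writing $\v{A}\v{s}(l)=\v{S}(l)\v{a}$ (which is possible because $\v{A}$ is diagonal), the observations $\v{x}(l)$ are a complex circular Gaussian random vector whose mean is \emph{linear} in the unknown $\v{a}$ and whose covariance $\v{R}_v$ does not depend on $\v{a}$. The CRB then reduces to a standard linear-Gaussian calculation and the final bound on $\v{g}(\v{\theta})$ follows from the transformation formula in \Cref{eq:crb_det_function}.

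First I would substitute $\v{v}(l)=\v{x}(l)-\v{S}(l)\v{a}$ in \Cref{eq::likelihood_cond} and expand the quadratic form as
\begin{equation*}
\sum_{l=1}^{L}\v{v}(l)^H\v{R}_v^{-1}\v{v}(l)
= c(\v{x}) - 2\,\mathrm{Re}\!\left\{\v{a}^H\!\sum_{l}\v{S}(l)^H\v{R}_v^{-1}\v{x}(l)\right\} + \v{a}^H\v{B}\,\v{a},
\end{equation*}
where the constant $c(\v{x})$ does not depend on $\v{\theta}$. The Wirtinger derivatives w.r.t.\ $\v{a}$ and $\v{a}^*$ are then immediate: the only second-order contributions come from the bilinear term $\v{a}^H\v{B}\v{a}$.

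Next, I would compute the four $KM\times KM$ blocks of $\grad_{\v{\theta}}^{2}\ln p$ according to the paper's convention $[\grad_{\v{\theta}}^{2}f]_{ij}=\partial^{2}f/\partial\theta_i\partial\theta_j^{*}$. Using $\v{\theta}=[\v{a}^T,\v{a}^H]^T$, the cross-blocks $\partial^2/\partial\v{a}\,\partial\v{a}^T$ and $\partial^2/\partial\v{a}^*\,\partial(\v{a}^*)^T$ vanish because the log-likelihood is bilinear (not quadratic) in $\v{a}$ and $\v{a}^*$. The remaining blocks evaluate to $-\v{B}^T$ and $-\v{B}$, respectively; since $\v{B}$ is Hermitian one has $\v{B}^T=\v{B}^*$, and the Hessian is deterministic so the FIM is simply
\begin{equation*}
\v{I}_{\v{\theta}}=\blkdiag(\v{B}^{*},\v{B}),
\qquad
\v{I}_{\v{\theta}}^{-1}=\blkdiag\bigl((\v{B}^{*})^{-1},\v{B}^{-1}\bigr).
\end{equation*}

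Finally I would use \Cref{eq:crb_det_function} with $\v{g}(\v{\theta})=\v{a}/\v{a}_{\text{ref}}$. Because $\v{g}$ depends on $\v{a}$ but not on $\v{a}^{*}$, the Jacobian factors as $\grad_{\v{\theta}}\v{g}=[\grad_{\v{a}}\v{g}\;\;\v{0}]$, so only the upper-left block of $\v{I}_{\v{\theta}}^{-1}$ contributes and the bound collapses to $(\grad_{\v{a}}\v{g})(\v{B}^{*})^{-1}(\grad_{\v{a}}^{H}\v{g})$. Reading off the $i$-th diagonal entry yields the per-component variance in \Cref{eq::crb_wideband_conditional}. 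The main bookkeeping obstacle is step three: tracking which of the four Wirtinger blocks contains $\v{B}$ versus $\v{B}^{*}$ under the paper's Hessian convention, since getting that conjugation wrong would move the unwanted $\v{B}^{-1}$ block into the final expression rather than $(\v{B}^{*})^{-1}$. Everything else is routine linear algebra.
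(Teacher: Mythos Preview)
Your proposal is correct and follows essentially the same route as the paper's proof: compute the Wirtinger Hessian of the Gaussian log-likelihood, obtain $\v{I}_{\v{\theta}}=\blkdiag(\v{B}^{*},\v{B})$, invert block-wise, and propagate through $\grad_{\v{\theta}}\v{g}=[\grad_{\v{a}}\v{g}\;\v{0}]$. The only cosmetic difference is that you first rewrite $\v{A}\v{s}(l)=\v{S}(l)\v{a}$ and expand the quadratic form $\v{a}^{H}\v{B}\v{a}$ before differentiating, whereas the paper differentiates entry-by-entry and then assembles the matrix; both arrive at the same blocks and the same final expression.
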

\begin{proof}
See \Cref{app:proof_crb_cond}.
\end{proof}
\blue{\subsubsection*{Interpretation}\noindent For ease of analysis, consider the case where the noise is spatially and spectrally uncorrelated, \ie $
\v{R}_v = \noisevar^2 \v{I}_{KM}
$. 
In this scenario, the $i$-th element on the diagonal of the Fisher information matrix is given by $[\v{I}_{\theta}]_{ii} = \noisevar^{-2} \sum_{l=1}^{L} |s_i(l)|^2$.
As the noise variance $\noisevar^2$ increases, the Fisher information $[I_{\theta}]_{ii}$ decreases. 
Conversely, increasing the number of frames $L$ available for estimation results in higher Fisher information, as the quantity $|s_i(l)|^2$ is always non-negative.
Thus, the achievable accuracy of the RTF estimation decreases with higher noise power and improves with more time frames.
}

\subsection{Unconditional Cramér--Rao bound} \label{sec:crb_uncond} \noindent
Consider again the model in \Cref{eq::sig_model_wideband_repeated}.
This time, we examine the more realistic scenario where the spectral-spatial covariance of the target signal $\v{R}_s$ is known but not the signal itself.
The transfer function $\v{A}$ is again deterministic but unknown.
This bound is then expected to be greater than the one derived in \Cref{th:cond} because the target signal is only known up to its second-order statistics.
In this case, the log-likelihood function is given by:
\begin{equation}\label{eq::likelihood_uncond}
\begin{split}
    \ln{p(\v{X}; \v{\theta})} &=
    -L\ln{|\pi\v{R}_x|} - L \trace{(\hat{\v{R}}_x\v{R}_x^{-1})},
\end{split}
\end{equation}
where $\v{R}_x = \v{A}\v{R}_s\v{A}^H + \v{R}_v$.
We have the following result.
\begin{theorem}[Unconditional CRB]\label{th:uncond}
In the unconditional settings, the variance of any unbiased RTF estimator is lower bounded by:
\begin{align}\label{eq::crb_wideband_unconditional}
    \text{CRB}[\v{g}(\hat{\v{\theta}})]_i =
    \left[
    (\grad_{\v{a}}\v{g})
    \v{C}
    (\grad^H_{\v{a}}\v{g})
    \right]_{ii},
\end{align}
for $i=1,\dots,M$, 
where $\v{C}$ is obtained by selecting the first $KM$ rows and columns from the inverse FIM $\v{I}_{\v{\theta}}^{-1}$.
\end{theorem}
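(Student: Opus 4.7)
My plan is to derive the Fisher information matrix via the Slepian--Bangs formula for zero-mean circular complex Gaussian observations and then push it through the transformation rule in \Cref{eq:crb_det_function}. Since the log-likelihood in \Cref{eq::likelihood_uncond} corresponds to $L$ independent samples with zero mean and parameter-dependent covariance $\v{R}_x = \v{A}\v{R}_s\v{A}^H + \v{R}_v$, the FIM under the paper's Wirtinger convention should collapse to
\begin{equation*}
[\v{I}_{\v{\theta}}]_{ij} = L\,\trace\!\left(\v{R}_x^{-1}\frac{\partial \v{R}_x}{\partial \theta_i}\,\v{R}_x^{-1}\,\frac{\partial \v{R}_x}{\partial \theta_j^*}\right).
\end{equation*}
Using $\v{A}=\diagp{\v{a}}$, the only two derivative types needed are $\partial \v{R}_x/\partial a_i = \v{E}_i \v{R}_s \v{A}^H$ and $\partial \v{R}_x/\partial a_i^* = \v{A}\,\v{R}_s\,\v{E}_i$, where $\v{E}_i = \v{e}_i \v{e}_i^T$ has a single $1$ at position $(i,i)$. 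Substituting these back and arranging the entries according to $\v{\theta} = [\v{a}^T\ \v{a}^H]^T$ produces a $2KM\times 2KM$ block FIM whose four blocks come from pairing the two derivative types.

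Next, I would apply the transformation rule \Cref{eq:crb_det_function} to $\v{\gamma} = \v{g}(\v{\theta})$. The key observation is that $\v{g}(\v{\theta}) = \v{a}/\v{a}_{\text{ref}}$ is holomorphic in $\v{a}$ and does not depend on $\v{a}^*$. Hence $\grad_{\v{\theta}}\v{g} = [\grad_{\v{a}}\v{g},\ \v{0}]$, and the quadratic form in \Cref{eq:crb_det_function} collapses to $(\grad_{\v{a}}\v{g})\,\v{C}\,(\grad_{\v{a}}^H\v{g})$, where $\v{C}$ is precisely the top-left $KM\times KM$ block of $\v{I}_{\v{\theta}}^{-1}$. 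Reading off the diagonal entries of this product then yields \Cref{eq::crb_wideband_unconditional}.

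Two obstacles stand out. The first is bookkeeping: one must keep the Wirtinger differentiation consistent with the paper's definition $[\grad^2_{\v{\theta}} f]_{ij} = \partial^2 f/\partial\theta_i\,\partial\theta_j^*$ so that the trace formula comes out without spurious conjugates, and one must invoke the standard complex-Gaussian fourth-moment identities to show that only the covariance-covariance cross-derivative term survives the expectation (the score term drops because the mean is zero and does not depend on $\v{\theta}$). The second, more substantive, issue is that by \Cref{lem::rank} the target covariance $\v{R}_s$ may be rank-deficient, which can make $\v{I}_{\v{\theta}}$ singular; the bound should then be interpreted with a Moore--Penrose pseudoinverse, and a mild identifiability assumption (essentially that $a_{kr}\ne 0$ for every $k$) is needed so that the division defining $\v{g}$ is well-posed and the final extraction of $\v{C}$ is unambiguous.
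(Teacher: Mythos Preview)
Your proposal is correct and lands on exactly the same bound, but you take a shortcut that the paper does not. The paper's Appendix proof computes the FIM from scratch: it differentiates the log-likelihood \Cref{eq::likelihood_uncond} once to get $\grad_{a_k^*}\mathcal{L}$, differentiates again with the product rule to obtain $\grad_{a_m}\grad_{a_k^*}\mathcal{L}$ as a sum of trace terms involving $\v{F}_k=\v{A}\v{R}_s\v{E}^{kk}$, $\v{G}_m=\v{E}^{mm}\v{R}_s\v{A}^H$, and $\v{H}_{mk}=\v{E}^{mm}\v{R}_s\v{E}^{kk}$, and then takes the expectation so that only the $L\,\trace(\v{R}_x^{-1}\v{F}_k\v{R}_x^{-1}\v{G}_m)$ term survives. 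You bypass this calculus entirely by invoking the zero-mean complex Slepian--Bangs identity, which is precisely what the paper's explicit computation re-derives for this particular model; your $\partial\v{R}_x/\partial a_i$ and $\partial\v{R}_x/\partial a_i^*$ coincide with the paper's $\v{G}_i$ and $\v{F}_i$. The transformation step, exploiting holomorphy of $\v{g}$ so that $\grad_{\v{\theta}}\v{g}=[\grad_{\v{a}}\v{g},\ \v{0}]$ and only the top-left block $\v{C}$ of $\v{I}_{\v{\theta}}^{-1}$ enters, is identical in both proofs. Your route is more compact and cites a standard result; the paper's is self-contained but longer. Your remark about possible singularity of $\v{I}_{\v{\theta}}$ and the implicit assumption $a_{kr}\neq 0$ is a useful caveat that the paper leaves unstated.
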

\begin{proof}
See \Cref{app:proof_crb_uncond}.
\end{proof}
\blue{\subsubsection*{Interpretation}\noindent 
Consider the case where the noise is uncorrelated, \ie $
\v{R}_v = \noisevar^2 \v{I}_{KM}$. 
The $i$-th element on the diagonal of the Fisher information matrix is given by $[\v{I}_{\theta}]_{ii} = L \trace{\left(
            \v{R}_x^{-1} \v{F}_i \v{R}_x^{-1} \v{G}_i
            \right)}$.
As the number of frames $L$ available for estimation increases, the Fisher information increases linearly.
Thus, as we have seen for the conditional CRB in \Cref{sec:crb_cond}, the achievable accuracy of the RTF estimation improves with more time frames.
Also, as the noise variance $\noisevar^2$ increases, the Fisher information decreases, since $\v{R}_x^{-1} = (\v{R}_d + \v{R}_v)^{-1}$.
The numerical simulations in the following sections also reveal that the unconditional CRB is always equal to or higher than the conditional CRB.
Intuitively, when estimating the RTF, knowing the target signal itself would be more useful than knowing the signal statistics only.
For further analytical insights, the reader can refer to \cite{stoica_performance_1990}.
}

\section{Experiments}\label{sec:experiments}\noindent
In the preceding sections, we developed an RTF estimation algorithm that considers both spectral and spatial correlations. 
We computed conditional and unconditional CRBs to gauge achievable accuracy. 
Following this, we conduct simulations to compare the performance of our proposed wideband algorithm (SVD-direct) to the benchmark narrowband method (CW) and the established performance bounds.
We employ two error metrics, the root-mean-squared-error (RMSE) and the Hermitian angle \cite{varzandeh_iterative_2017}.
The RMSE is defined as:
\begin{align}\label{eq::rmse}
    \text{RMSE} =
    10 \log
    \sqrt{
    \frac{1}{KM}
        \|\hat{\v{a}} - \v{a}\|^2
        }
    \ (\si{\decibel})
    ,
\end{align}
while the Hermitian angle is calculated as:
\begin{align}\label{eq::hermitian_angle}
    \frac{1}{K}
        \sum_{k=1}^{K}
        \text{acos}{\left(
        \frac
        {|\,\hat{\v{a}}_k^H \v{a}_k|}
        {\|\hat{\v{a}}_k^H\| \|\v{a}_k\|}
        \right)}
        \ (\si{\radian})
        .
\end{align}
The RMSE accounts for discrepancies in the magnitude and phase, whereas the Hermitian angle depends exclusively on the angle between the RTFs.
The CRBs are only defined for error measures based on the MSE.
Therefore, these bounds are not shown in the plots that employ the Hermitian angle metric.
We also define the signal-to-noise ratio (SNR) in the frequency domain as:
\begin{align}\label{eq::snr}
    \text{SNR} =
    10 \log
    \frac
    {\sum_{i=1}^{KM}[\v{R}_d]_{ii}}
    {\sum_{i=1}^{KM}[\v{R}_v]_{ii}}
    \ (\si{\decibel}).
\end{align}
In all plots of \Cref{sec::equicorrelated,sec:varcorrelated,sec:real_speech_exp}, points connected by a continuous {\color{black}red line} show the error for the proposed algorithm (\Cref{alg:svd-direct});
points connected by a {\color{black}blue dotted line} show errors for the benchmark algorithm (CW); points connected by a {\color{black}green dash-dotted line} show the conditional CRB (\Cref{th:cond});
points connected by a {\color{black}purple dashed line} show the unconditional CRB (\Cref{th:uncond}).

We conduct five sets of experiments to explore increasingly realistic scenarios.
In the first two sets of experiments (\Cref{sec::equicorrelated,sec:varcorrelated}), we analyze scenarios where independent realizations of the signals are drawn from ideal multivariate Gaussian distributions.
In \Cref{sec::equicorrelated}, the target and noise powers at all frequencies are set to the same value and then rescaled to the desired SNR.
\Cref{sec:varcorrelated} describes a more realistic scenario where target and noise powers vary across frequencies.
Results are shown for a single random draw of the target TF $\v{a}$ and of the actual covariance matrices $\v{R}_s$ and $\v{R}_v$ because the CRB is defined for specific parameter values.
Nonetheless, similar outcomes are observed for other realizations.
To simulate the complex channel vector $\v{a}$, we generate two uniformly distributed random vectors with values from -1 to 1 and use them for the real and imaginary parts.
For the synthetic data of \Cref{sec::equicorrelated,sec:varcorrelated}, the lines in the figures are the mean results averaged across 5000 Montecarlo realizations. The faded area represents the \SI{95}{\percent} confidence interval \cite{altman_standard_2005}.
The bounds are evaluated at the actual values of the parameters.

The other three sets of experiments deal with real data.
The covariance matrices are thus estimated from overlapping STFT frames using the phase-corrected estimator introduced in \Cref{ssec:phase_adjust}.
\Cref{sec:corr_analysis} investigates the correlation coefficients of measured speech signals.
The experiments of \Cref{sec:real_speech_exp,sec:beamforming_bf} apply the proposed algorithms to recorded anechoic speech convolved with real room impulse responses (RIRs), and evaluate both the RTF estimation accuracy and the effect of employing the estimated RTFs for beamforming.
The ground truth TF $\v{a}$ is computed as the discrete Fourier transform of the first $K$ samples of the RIR.
We perform 50 Montecarlo repetitions of the real-data experiments.
Gaussian noise at \SI{40}{\decibel} SNR is added to $\v{v}$ in all experiments to account for sensor noise and simultaneously improve numerical conditioning of the inverse of the noise covariance matrix $\v{R}_v$.
\blue{
We also measure the computational complexity of the algorithm in \Cref{sec:complexity}.
As mentioned in \Cref{sec:intro}, all the simulations are implemented in Python, and the code to generate all figures in the paper is freely available online}.

\subsection{Equicorrelated, equal powers}
\label{sec::equicorrelated}
\noindent
The `equicorrelated' formulation, also considered in \cite{kasasbeh_noise_2017}, assumes that the noise signal exhibits identical variances at all sensors and frequency components.
The target signal has unit variance at all frequency components.
The cross-expectations over different frequency components are $\upsilon_f$ for the noise and $\rho_f$ for the target.
Because the frequency correlations are non-zero, the covariance matrices $\v{R}_x$ and $\v{R}_v$ describe non-WSS processes.
Taking again the case of $M=2$ sensors and $K=2$ frequency components to simplify the exposition, the noise covariance matrix $\v{R}_{v}$ is given by:
\begin{align}\label{eq:equicorr_noise_corr}
\v{R}_{v} = \noisevar^2
    \begin{bmatrix}
        1 & 0 & \upsilon_f & 0 \\
         0 & 1 & 0 & \upsilon_f \\
         \upsilon_f^* & 0 & 1 & 0 \\
         0 & \upsilon_f^* & 0 & 1
    \end{bmatrix},
\end{align}
where $\upsilon_f \in [0, 1]$ and $\noisevar^2$ is scaled according to \Cref{eq::snr} to yield the desired SNR.
Similarly, the desired covariance matrix at the source $\v{R}_{s} =
\v{R}_{\sbeforekron} \kron \v{1}_{M\times M}$ is given by:
\begin{align}\label{eq:equicorr_target_corr}
\scalemath{1.}{
\v{R}_{s} =
    \begin{bmatrix}
        1 & \rho_f \\
        \rho_f^* & 1
    \end{bmatrix} \kron \v{1}_{M\times M}
    =
    \begin{bmatrix}
        1 & 1 & \rho_f & \rho_f \\
         1 & 1 & \rho_f & \rho_f \\
         \rho_f^* & \rho_f^* & 1 & 1 \\
         \rho_f^* & \rho_f^* & 1 & 1
    \end{bmatrix}.
}
\end{align}
The desired covariance matrix at the receivers follows from \Cref{eq:desired_cov_receiver}.
The stimuli $\v{s}(l)$ and $\v{v}(l)$, where $l$ is the realization index, are generated through affine transformations applied to $L$ independent and identically distributed realizations $\v{n}(l)$ of a white complex multivariate Gaussian distribution $\v{n} \sim \mathcal{CN}(\v{0}, \v{I})$. For example, $\v{s}(l) = \v{R}_s^{1/2}\v{n}(l)$, and this implies $\v{s}(l) \sim \mathcal{CN}(\v{0}, \v{R}_s)$.
The estimates of the target and noise covariance matrices are derived through the sample covariance estimator of \Cref{eq:sample_cov_matrix}.
The phase-corrected estimator in \Cref{eq:phase_corr_cov} is indeed superfluous when independent signal realizations are available.
Unless specified differently, the SNR is set to $\SI{-5}{\decibel}$ in all experiments.
The signal correlation is set to $\rho_f=0.25$, the noise correlation to $\upsilon_f=0.25$, the number of frames to compute the sample covariance matrices to $L=1000$, the number of sensors to $M=2$ and the FFT length to $K=5$.
The true noise covariance matrix $\v{R}_v$ is used in all algorithms, aligning with the CRB assumptions.
Nonetheless, we noticed similar results when estimating $\v{R}_v$ from a separate realization of the noise-only signal.
The algorithms and the bounds are tested by varying four independent parameters: noise correlation $\upsilon_f$, target correlation $\rho_f$, number of time frames $L$, and SNR.

\subsubsection*{Varying noise correlation \texorpdfstring{$\upsilon_f$}{vf}}
\label{sec:equicorr_noisecorr}
In the first experiment, we analyze the performance of the algorithms as the noise frequency correlation $\upsilon_f$ varies between 0 and 1 (\Cref{fig:equicorr_noise_corr}).
We generally observe that the RMSE and the Hermitian angle metrics follow similar trends.
Let us first consider the scenario where the target has low correlation ($\rho_f=0.25$), corresponding to \Cref{fig:equicorr_noise_corr_lowrmse,fig:equicorr_noise_corr_lowha}.
The two algorithms perform equally well when the noise correlation $\upsilon_f$ is low, while the proposed method shows improved accuracy for high values of $\upsilon_f$.
In other words, the SVD-direct algorithm can partially take advantage of increased noise correlation, while the benchmark algorithm cannot.
Now, consider the case where the target shows high correlation ($\rho_f=0.75$), corresponding to \Cref{fig:equicorr_noise_corr_highrmse,fig:equicorr_noise_corr_highha}.
The proposed method outperforms the benchmark for all values of $\upsilon_f$, with improvements of approximately \SI{3}{\decibel} in RMSE and \SI{0.02}{\radian} in Hermitian angle.
Examining the conditional and unconditional CRBs, we note that substantial accuracy improvements are achievable when the noise exhibits a high correlation.
\begin{figure}[t]
\vspace{-5mm}
    \centering
    \subfloat[]{%
       \includegraphics[width=\figwidthnormal]{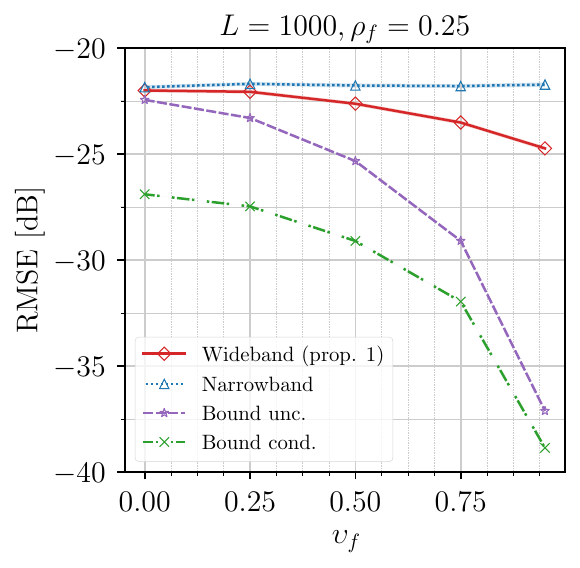}\label{fig:equicorr_noise_corr_lowrmse}}%
    \hfill%
    \subfloat[]{%
        \includegraphics[width=\figwidthnormal]{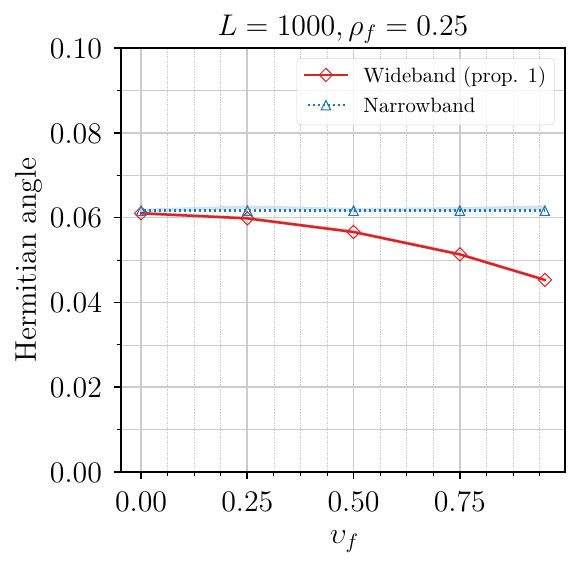}%
        \label{fig:equicorr_noise_corr_lowha}}%
    \hspace*{\fill}%
    \\
    \subfloat[]{%
        \includegraphics[width=\figwidthnormal]{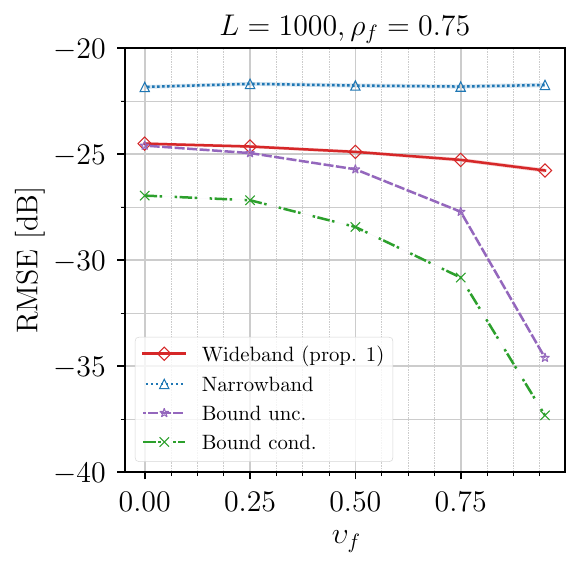}\label{fig:equicorr_noise_corr_highrmse}}
    \hfill
    \subfloat[]{%
       \includegraphics[width=\figwidthnormal]{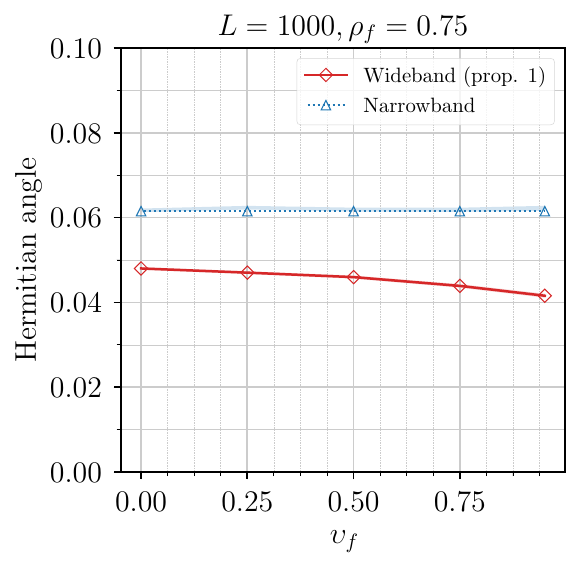}\label{fig:equicorr_noise_corr_highha}}
    \hfill
  \caption{Algorithm performance under varying noise frequency correlation $\upsilon_f$, with different levels of target correlation $\rho_f$. The top two plots (a) and (b) represent a less correlated target ($\rho_f=0.25$), while the bottom row (c) and (d) show a highly correlated target ($\rho_f=0.75$). Each column corresponds to different evaluation metrics: the left column displays the RMSE, and the right column shows the Hermitian angle.}
  \label{fig:equicorr_noise_corr}
\end{figure}
\subsubsection*{Varying target correlation \texorpdfstring{$\rho_f$}{pf}}\label{sec:equicorr_targetcorr}
In this section, we analyze the performance of the algorithms as the target frequency correlation $\rho_f$ varies between 0 and 1 (\Cref{fig:equicorr_target_corr}).
Because SVD-direct is explicitly designed to take advantage of spectral correlations in the target, we expect it to yield better performance for higher values of $\rho_f$.
If the noise has low correlation ($\upsilon_f=0.25$, corresponding to the top row in \Cref{fig:equicorr_target_corr}), the two algorithms perform equally well for low target correlation values $\rho_f$.
Additionally, we observe that the proposed method can fully exploit the target correlation and shows improvements in the accuracy of up to \SI{4}{\decibel} in RMSE and \SI{0.02}{\radian} in Hermitian angle for high values of $\rho_f$.
The benchmark algorithm is not affected by variations in the target spectral correlation.
Notice that the proposed algorithm achieves the CRB if a high target correlation is present, meaning that further improvements in accuracy in this scenario are not possible.
Interestingly, the unconditional performance bound exhibits different trends for low and high noise correlation. 
The unconditional bound decreases with higher target correlations when the noise correlation is low (\Cref{fig:equicorr_targetcorr_low_rmse}). 
Conversely, the maximum accuracy is lower as the target correlation increases for high noise correlation (\Cref{fig:equicorr_targetcorr_high_rmse}).
This aligns with findings from previous studies \cite{kasasbeh_noise_2017}.
This seeming discrepancy can be better understood through analogy: when two point sources are located close together in space, they show maximal \emph{spatial} correlation and exhibit similar correlation patterns, making it difficult to separate them. In our experiment, the noise and target sources have high \emph{spectral} correlation, and they share the same correlation pattern (\Cref{eq:equicorr_noise_corr,eq:equicorr_target_corr}).
We might say that they are ``spectrally superimposed" because their powers and correlation coefficients are the same, yielding very similar spectral covariance matrices.
As a result, they are harder to distinguish than two spectrally independent sources.
\begin{figure}[t]
\vspace{-5mm}
    \centering
    \subfloat[]{%
       \includegraphics[width=\figwidthnormal]{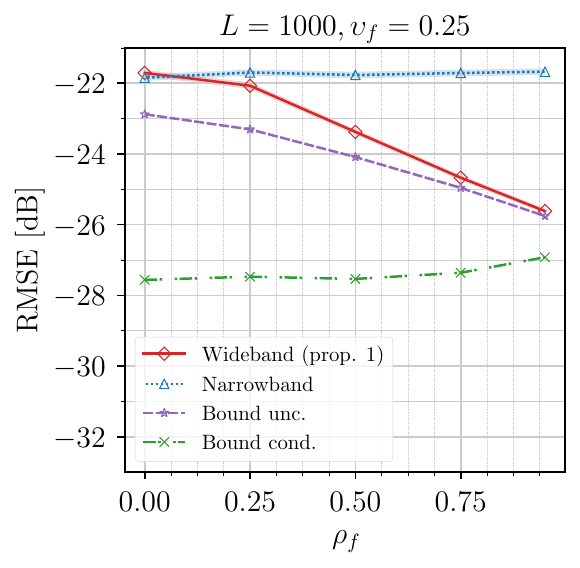}\label{fig:equicorr_targetcorr_low_rmse}
       }
    \hfill
    \subfloat[]{%
        \includegraphics[width=\figwidthnormal]{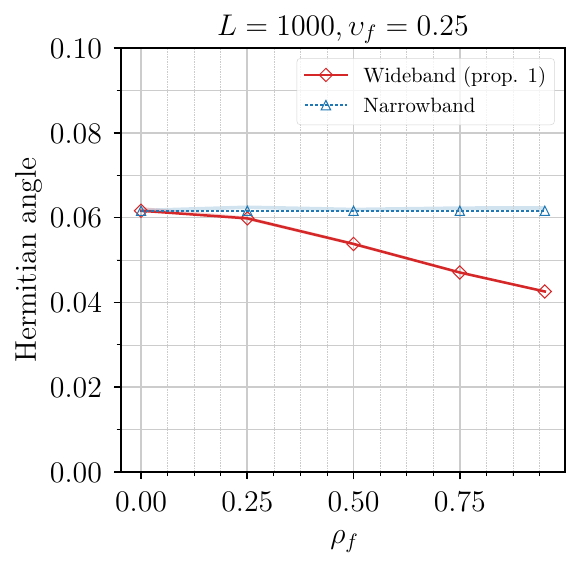}\label{fig:equicorr_targetcorr_low_ha}}
    \hfill
    \\
    \subfloat[]{%
        \includegraphics[width=\figwidthnormal]{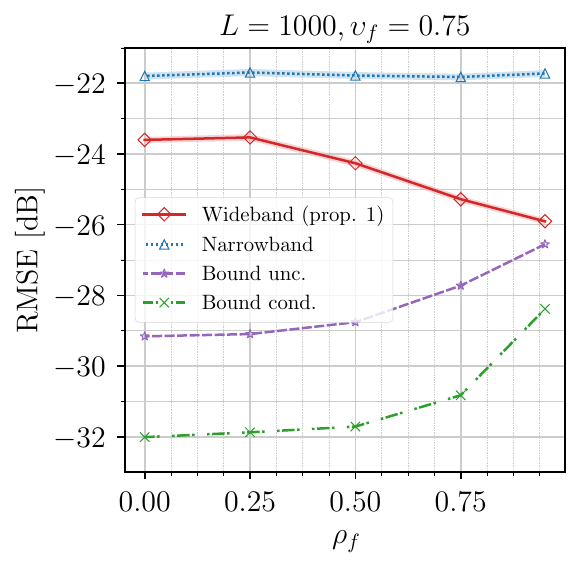}\label{fig:equicorr_targetcorr_high_rmse}}
    \hfill
    \subfloat[]{%
       \includegraphics[width=\figwidthnormal]{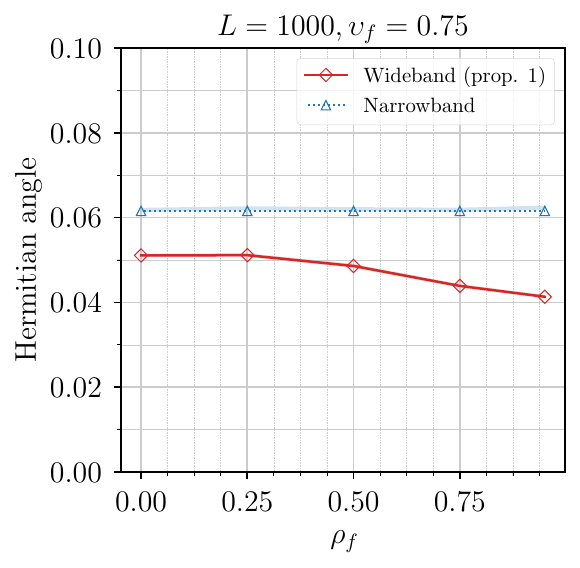}\label{fig:equicorr_targetcorr_high_ha}}
    \hfill
  \caption{Algorithm performance under varying target frequency correlation $\rho_f$, with different levels of noise correlation $\upsilon_f$. The top two plots (a) and (b) represent less correlated noise ($\upsilon_f=0.25$), while the bottom row (c) and (d) show highly correlated noise ($\upsilon_f=0.75$). The left column corresponds to RMSE, and the right column shows the Hermitian angle.}
  \label{fig:equicorr_target_corr}
\end{figure}

\subsubsection*{Varying number of frames \texorpdfstring{$L$}{L}}\label{sec:equicorr_frames}
We now analyze the performance of the algorithms when the number of frames $L$ to estimate the target covariance matrix ${\v{R}}_d$ is varied between $L=10$ and $L=5000$ (\Cref{fig:equicorr_frames}).
As expected, both algorithms perform better when more frames are available.
For low values of target and noise correlation (\Cref{fig:equicorr_frames_lowrmse,fig:equicorr_frames_lowherm}), the two algorithms perform similarly when the number of available time frames is large, whereas the proposed algorithm is slightly less accurate when $L$ is small.
When the target correlation is high, the wideband method shows smaller errors for any number of frames $L>10$ and converges to the unconditional CRB for a high number of frames.
\begin{figure}[t]
\vspace{-5mm}
    \centering
    \subfloat[]{%
       \includegraphics[width=\figwidthnormal]{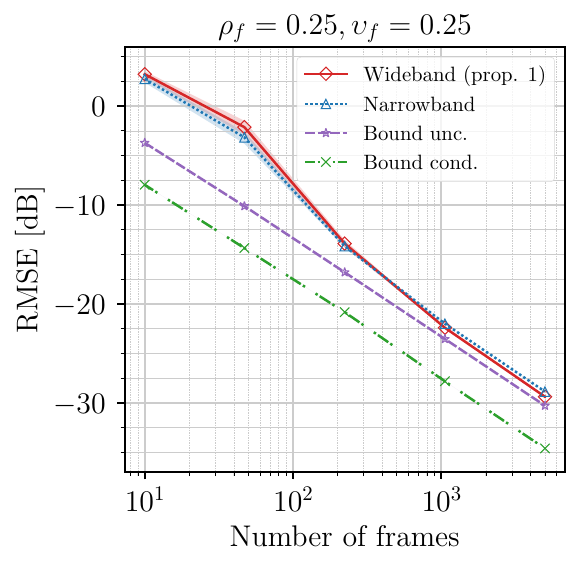}\label{fig:equicorr_frames_lowrmse}
       }
    \hfill
    \subfloat[]{%
        \includegraphics[width=\figwidthnormal]{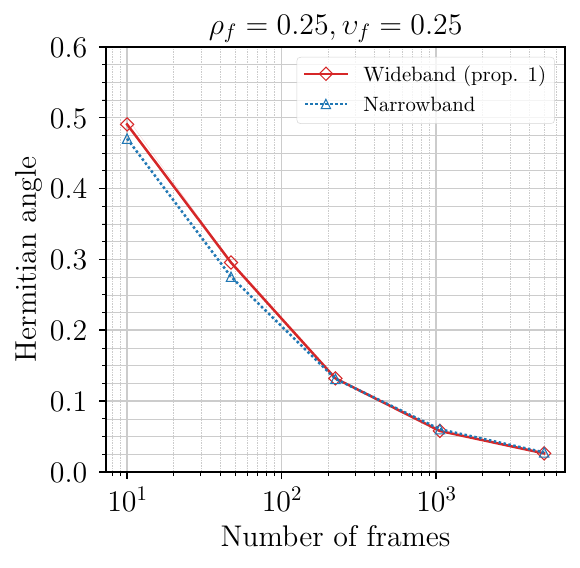}\label{fig:equicorr_frames_lowherm}
        }
    \hfill
    \\
    \subfloat[]{%
        \includegraphics[width=\figwidthnormal]{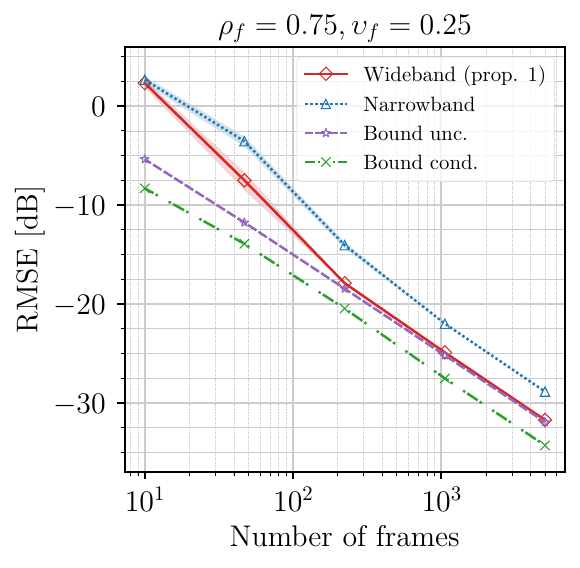}}
    \hfill
    \subfloat[]{%
       \includegraphics[width=\figwidthnormal]{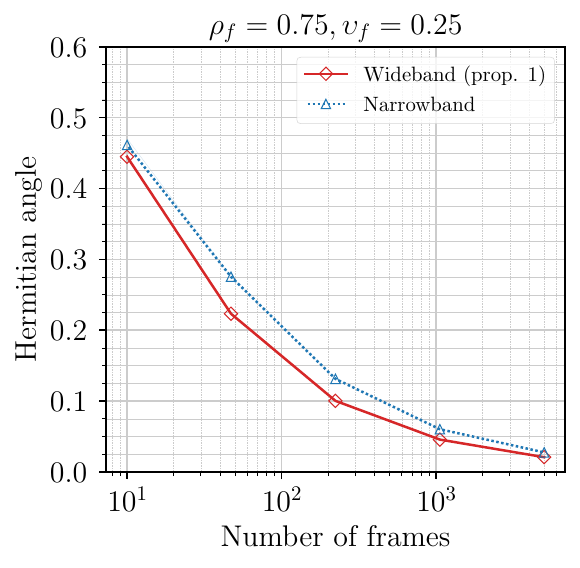}}
    \hfill
  \caption{Algorithm performance under varying number of time frames $L$, with different levels of target correlation $\rho_f$. The top two plots (a) and (b) represent a less correlated target ($\rho_f=0.25$), while the bottom row (c) and (d) show a highly correlated target ($\rho_f=0.75$). The left column corresponds to RMSE, and the right column shows the Hermitian angle.}
  \label{fig:equicorr_frames}
\end{figure}
\subsubsection*{Varying SNR}\label{sec:equicorr_snr}
\begin{figure}[t]
\vspace{-3mm}
    \centering
    \subfloat[]{%
       \includegraphics[width=\figwidthnormal]{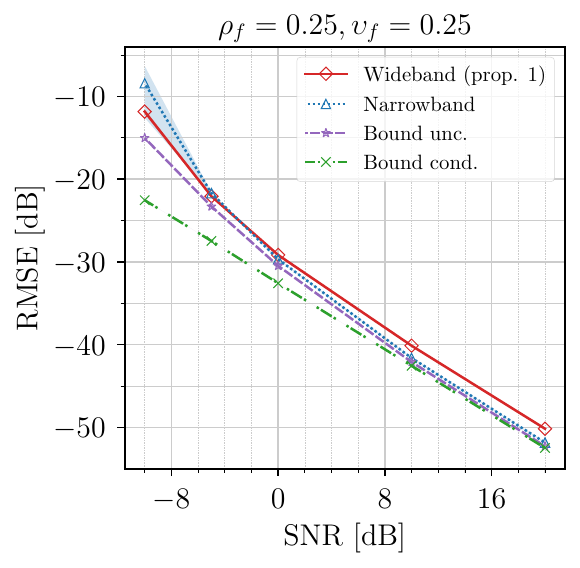}\label{fig:equicorr_snr_lowlow_rmse}
       }
    \hfill
    \subfloat[]{%
        \includegraphics[width=\figwidthnormal]{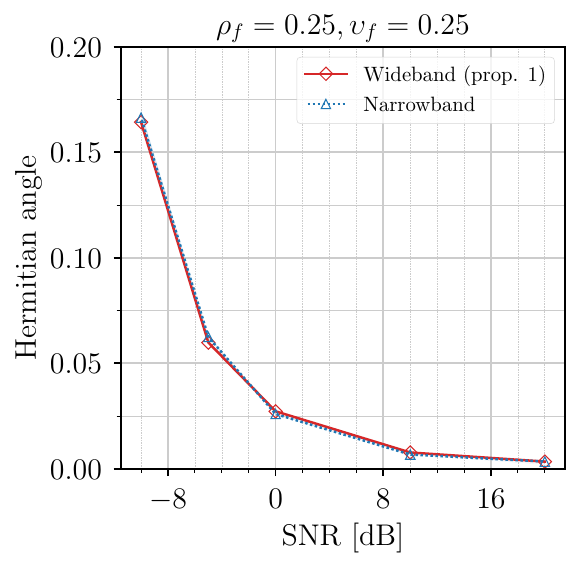}\label{fig:equicorr_snr_lowlow_ha}}
    \hfill
    \\
    \subfloat[]{%
        \includegraphics[width=\figwidthnormal]{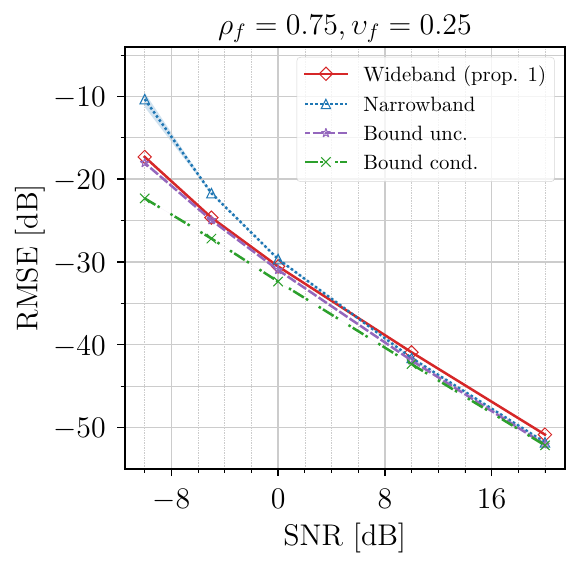}\label{fig:equicorr_snr_high_rmse}}
    \hfill
    \subfloat[]{%
       \includegraphics[width=\figwidthnormal]{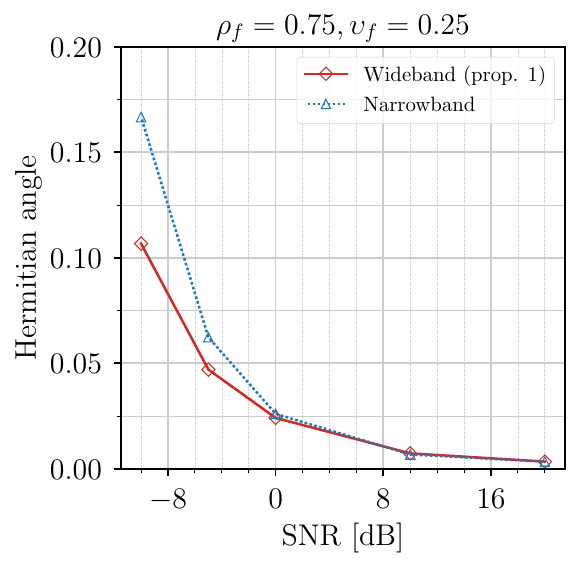}\label{fig:equicorr_snr_high_ha}}
    \hfill
  \caption{Algorithm performance under varying SNR, with different levels of target correlation $\rho_f$. The top two plots (a) and (b) represent a less correlated target ($\rho_f=0.25$), while the bottom row (c) and (d) show a highly correlated target ($\rho_f=0.75$). The left column corresponds to RMSE, and the right column shows the Hermitian angle.}
  \label{fig:equicorr_snr}
\end{figure}
This experiment analyzes the performance of the algorithms when the SNR varies between \SI{-10} and \SI{20}{\decibel} (\Cref{fig:equicorr_snr}).
Unsurprisingly, both algorithms perform better when the noise is less prominent.
The methods perform similarly for low target correlation values (\Cref{fig:equicorr_snr_lowlow_rmse,fig:equicorr_snr_lowlow_ha}).
In contrast, for high target correlation (\Cref{fig:equicorr_snr_high_rmse,fig:equicorr_snr_high_ha}), the proposed method shows significant performance gains of up \SI{8}{\decibel} in RMSE and \SI{0.05}{\radian} in Hermitian angle in noisy scenarios.
Both algorithms are close to the unconditional CRB for high SNR values.
\subsection{Equicorrelated, different powers}\label{sec:varcorrelated}\noindent
In the second set of experiments, we extend the `equicorrelated' scenario described in \Cref{sec::equicorrelated}, by incorporating varying signal powers across different frequency components and sensors.
This scenario is not only more realistic than the previous one, but it also leads to more diverse spectral correlation patterns --- that is, covariance matrices --- for the target and the noise signal, limiting the ``spectral superposition" phenomenon observed in \Cref{sec:equicorr_targetcorr}.
In this simulation model, special care must be taken to ensure the validity of the simulated covariance matrices $\v{R}_v$ and $\v{R}_s$.

Let $[\v{v}]_{km} = v_{km}$ be the noise signal at frequency $k$ and sensor $m$,
with variance $\E{ |v_{k m}|^2} = \noisevar{_{k m}^2}$.
By the Cauchy–Schwarz inequality, it is known that the covariance between the two discrete complex random variables $v_{k_1 m_1},v_{k_2 m_2}$, corresponding to the $(k_1 m_1, k_2 m_2)$ element of $\v{R}_v$, is upper-bounded by:
\begin{align}\label{eq:cauchy-rv-noise}
     |\E{ v_{k_1 m_1} v_{k_2 m_2}^*}|^2
     \leq
     \E{ |v_{k_1 m_1}|^2} \E{ |v_{k_2 m_2}|^2}.
\end{align}
Therefore, we can simulate $\v{R}_v$ with a two-step procedure.
First, the variances $\noisevar^2_{km}$ on the diagonal of $\v{R}_v$ are drawn from a uniform distribution $\mathcal{U}(\epsilon, 0.5)$, where $\epsilon > 0$ is a small positive number.
Next, the covariance values are calculated as
\begin{align}\label{eq:varcorrelated_crosselement}
     \E{ v_{k_1 m_1} v_{k_2 m_2}^*} =
     \begin{cases}
     0, & \text{if}\ m_1 \neq m_2, \\
     \upsilon_f \sqrt{\noisevar{_{k_1 m_1}^2} \noisevar{_{k_2 m_2}^2}} & \text{if}\ m_1 = m_2, \\
     \end{cases}
\end{align}
where the factor $\upsilon_f \in [0, 1]$ models the noise inter-frequency correlation.
Because $\upsilon_f \leq 1$, \Cref{eq:varcorrelated_crosselement} leads to covariance values that are always smaller than their theoretical maxima.
The correlations across different sensors are set to 0 since we model spatially uncorrelated noise.
Finally, $\v{R}_v$ is rescaled by a global noise variance $\noisevar^2$ to yield the desired SNR according to \Cref{eq::snr}.
Analogously, the desired covariance matrix at the source is given by:
\begin{align}
[\v{R}_{\sbeforekron}]_{k_1 k_2} =
    \begin{dcases}
        \sigma^2_{k_1 k_2} \sim \mathcal{U}(\epsilon, 0.5) &\text{if } k_1 = k_2, \\
        \rho_f \sqrt{\sigma_{k_1 k_1}^2 \sigma_{k_2 k_2}^2} &\text{if } k_1 \neq k_2.
    \end{dcases}
\end{align}
The desired covariance matrix at the receivers follows again from \Cref{eq:desired_cov_receiver,eq:rs_kron}.
The sampling procedure and the other simulation parameters follow from \Cref{sec::equicorrelated}.
\subsubsection*{Varying noise correlation \texorpdfstring{$\upsilon_f$}{vf}}
The performance of the algorithms is examined as the noise frequency correlation  $\upsilon_f$ varies from 0 to 1 (\Cref{fig:varcor_noise_corr}).
The wideband algorithm outperforms the narrowband one in all cases.
The difference in accuracy is larger when the target correlation $\rho_f$ is higher (\Cref{fig:varcor_noise_corr-highrmse,fig:varcor_noise_corr-highha}), reaching an improvement of up to \SI{8}{\decibel} RMSE and \SI{0.05}{\radian}.
Notice that the performance gains are more significant than in the `equal powers' scenario of \Cref{sec:equicorr_noisecorr}.
We also observe that the error of the SVD-direct algorithm slightly decreases for very high noise correlation $\upsilon_f \geq 0.75$. Still, the gap between the unconditional CRB and the algorithms indicates that further improvements are possible.
\begin{figure}[tbp]
\vspace{-5mm}
    \centering
    \subfloat[]{%
       \includegraphics[width=\figwidthnormal]{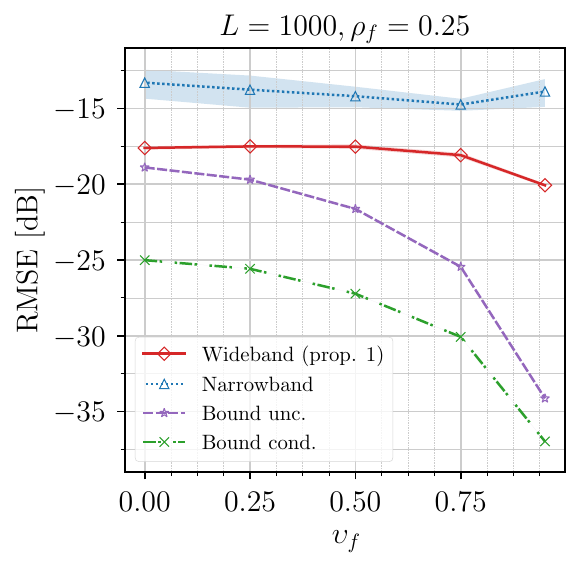}\label{fig:varcor_noise_corr-lowrmse}
       }
    \hfill
    \subfloat[]{%
        \includegraphics[width=\figwidthnormal]{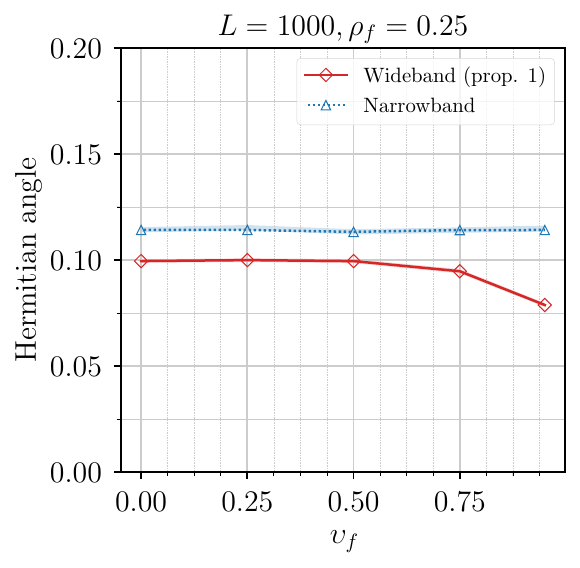}\label{fig:varcor_noise_corr-lowha}}
    \hfill
    \\
    \subfloat[]{%
        \includegraphics[width=\figwidthnormal]{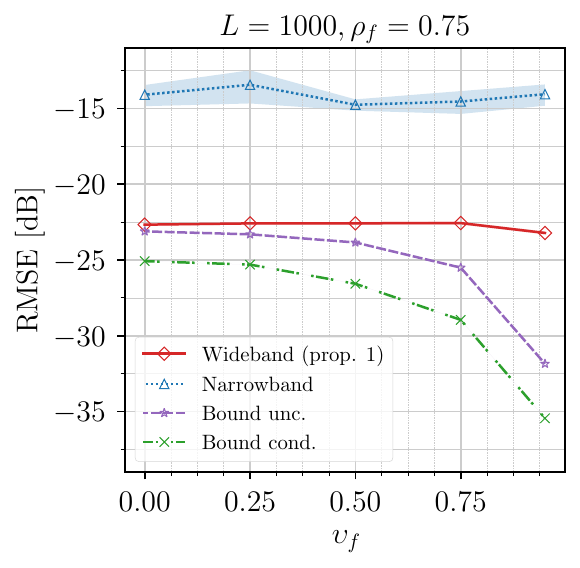}\label{fig:varcor_noise_corr-highrmse}}
    \hfill
    \subfloat[]{%
       \includegraphics[width=\figwidthnormal]{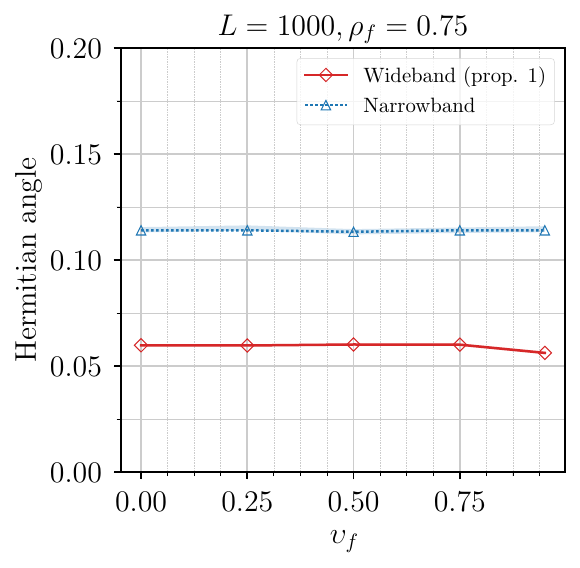}\label{fig:varcor_noise_corr-highha}}
    \hfill
  \caption{Algorithm performance for non-uniform target and noise powers under varying noise frequency correlation $\upsilon_f$, with different levels of target correlation $\rho_f$.
  The top two plots (a) and (b) represent a less correlated target ($\rho_f=0.25$), while the bottom row (c) and (d) show a highly correlated target ($\rho_f=0.75$). The left column corresponds to RMSE, and the right column shows the Hermitian angle.}
  \label{fig:varcor_noise_corr}
\end{figure}
\subsubsection*{Varying target correlation \texorpdfstring{$\rho_f$}{pf}}
Next, we turn to one of the key experiments of the present study, where we analyze the performance of the algorithms as the target frequency correlation $\rho_f$ varies between 0 and 1 for arbitrary noise and signal powers (\Cref{fig:varcor_target_corr}).
The wideband algorithm takes advantage of higher target spectral correlations $\rho_f$, as already observed in \Cref{sec:equicorr_targetcorr}: both for low and high noise correlation, SVD-direct has significantly better performance for higher values of $\rho_f$, reaching improvements of \SI{10}{\decibel} RMSE and \SI{0.05}{\radian} Hermitian angle.
On the other hand, CW performs slightly better when the target correlation is completely absent ($\rho_f = 0$).
A likely explanation is that the narrowband approach exploits the a priori knowledge that the target signal is uncorrelated across frequency.
However, we argue that the scenario where $\rho_f=0$ is unlikely to occur in practice for the reasons highlighted in \Cref{sec:intro}.
\blue{This intuition is also confirmed in the correlation analysis of real data in \Cref{sec:corr_analysis}.}
Turning our focus to the performance bounds, we notice that the SVD-direct method achieves the CRB if a high target correlation is present.
\hide{Similar to \Cref{sec:equicorr_targetcorr}, when the noise correlation is lower (\Cref{fig:varcor_target_corr_lowrmse}), the unconditional CRB decreases for increasing target correlation.
In contrast, for higher noise correlation (\Cref{fig:varcor_target_corr_highrmse}), the interaction between noise and target correlation leads to an unconditional CRB that is almost constant for varying $\rho_f$. Notice that such interactions were also observed in \cite{kasasbeh_noise_2017}.}
\begin{figure}[tbp]
\vspace{-5mm}
    \centering
    \subfloat[]{%
       \includegraphics[width=\figwidthnormal]{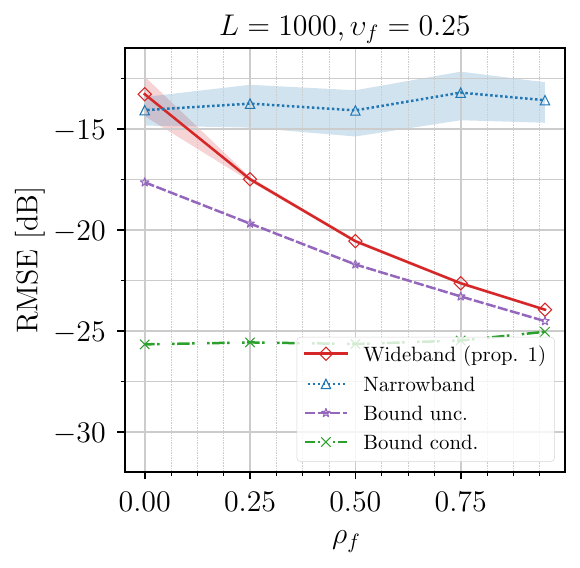}\label{fig:varcor_target_corr_lowrmse}
       }
    \hfill
    \subfloat[]{%
        \includegraphics[width=\figwidthnormal]{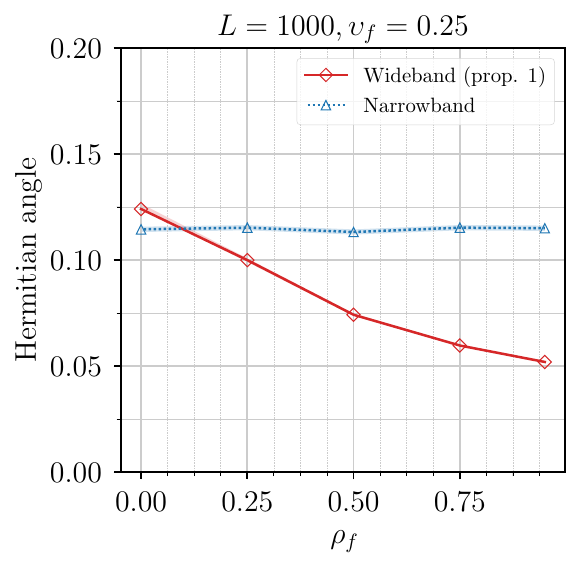}}
    \hfill
    \\
    \subfloat[]{%
        \includegraphics[width=\figwidthnormal]{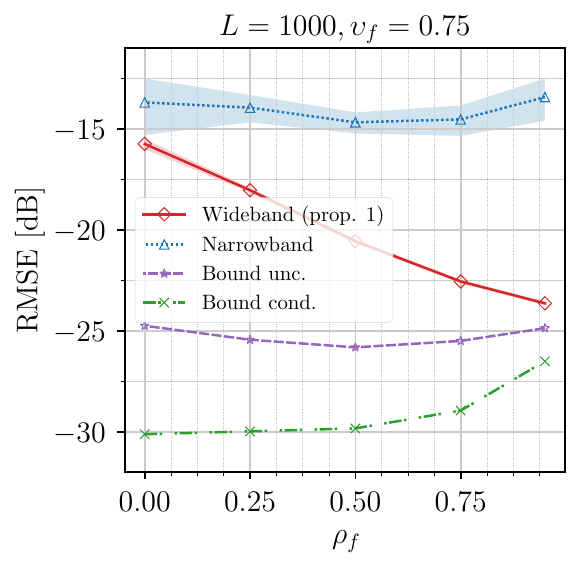}\label{fig:varcor_target_corr_highrmse}}
    \hfill
    \subfloat[]{%
       \includegraphics[width=\figwidthnormal]{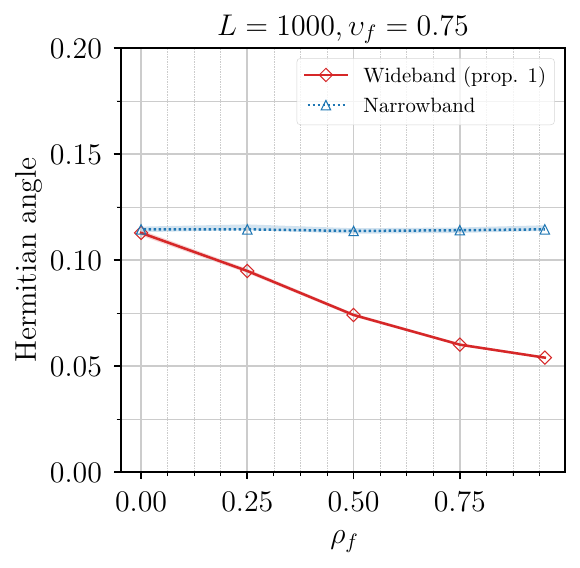}}
    \hfill
  \caption{Algorithm performance for non-uniform target and noise powers, under varying target frequency correlation $\rho_f$, with different levels of noise correlation $\upsilon_f$. The top row ((a) and (b)) represents less correlated noise ($\upsilon_f=0.25$), while the bottom row ((c) and (d)) shows highly correlated noise ($\upsilon_f=0.75$). The left column corresponds to RMSE, and the right column shows the Hermitian angle.}
  \label{fig:varcor_target_corr}
\end{figure}
\subsubsection*{Varying number of frames \texorpdfstring{$L$}{L}}
We now evaluate the different approaches when estimating the covariance matrices with varying numbers of time frames $L$ (\Cref{fig:varcor_frames}).
The narrowband and wideband approaches exhibit similar performance for scenarios with low target and noise correlation (\Cref{fig:varcor_frames_lowrmse,fig:varcor_frames_lowha}).
The benchmark method is slightly more accurate when only a few frames are available ($L<50$), whereas the proposed algorithm outperforms CW for a higher number of frames.
Because wideband spectral-spatial covariance matrices are considerably larger than narrowband spatial covariance matrices, accurate estimation of the former requires more realizations.
When the target is highly correlated (\Cref{fig:varcor_frames_highrmse,fig:varcor_frames_highha}), the wideband method consistently matches or outperforms the narrowband method.
\begin{figure}[t]
\vspace{-5mm}
    \centering
    \subfloat[]{%
       \includegraphics[width=\figwidthnormal]{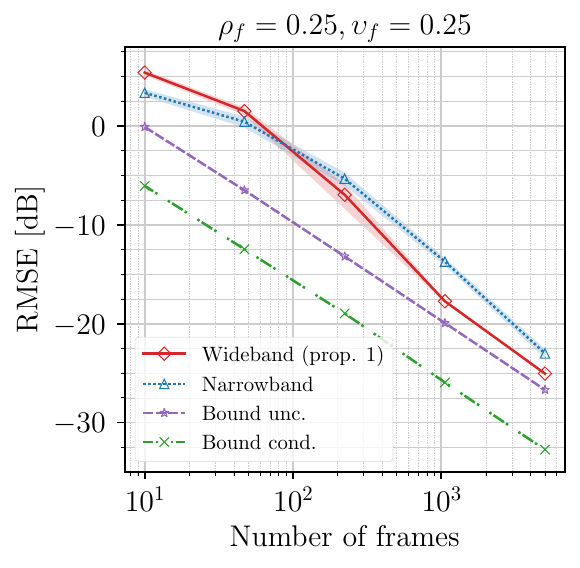}\label{fig:varcor_frames_lowrmse}
       }
    \hfill
    \subfloat[]{%
        \includegraphics[width=\figwidthnormal]{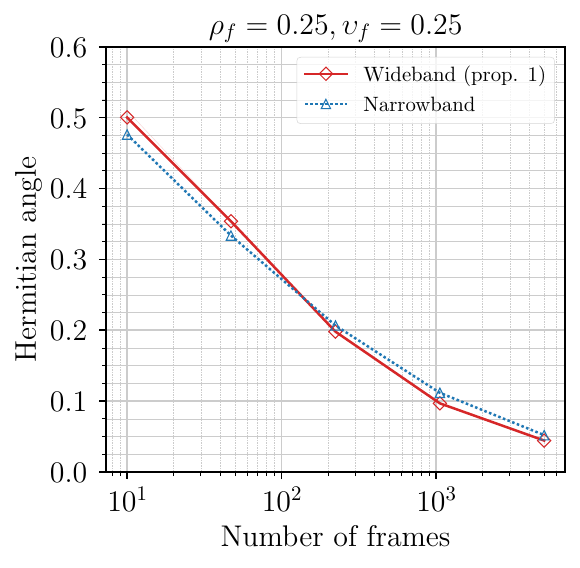}\label{fig:varcor_frames_lowha}}
    \hfill
    \\
    \subfloat[]{%
        \includegraphics[width=\figwidthnormal]{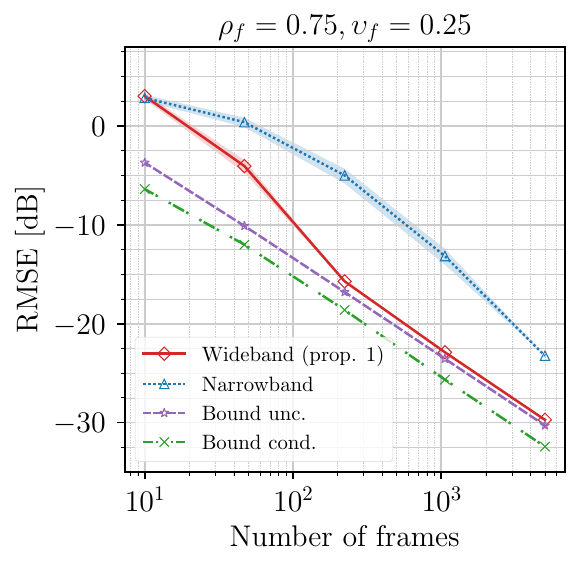}\label{fig:varcor_frames_highrmse}}
    \hfill
    \subfloat[]{%
       \includegraphics[width=\figwidthnormal]{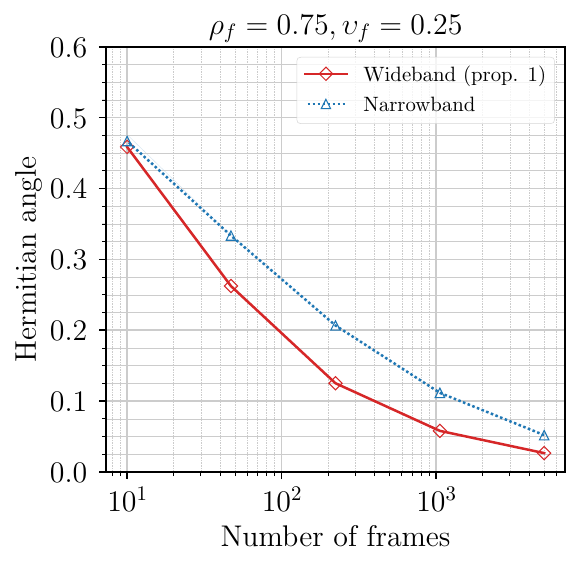}\label{fig:varcor_frames_highha}}
    \hfill
  \caption{Algorithm performance for non-uniform target and noise powers, under varying number of time frames $L$, with different levels of target correlation $\rho_f$. The top row ((a) and (b)) represents less correlated target ($\rho_f=0.25$), while the bottom row ((c) and (d)) shows highly correlated target ($\rho_f=0.75$). The left column corresponds to RMSE, and the right column shows the Hermitian angle.}
  \label{fig:varcor_frames}
\end{figure}
\subsubsection*{Varying SNR}
Lastly, in \Cref{fig:varcor_snr}, we examine the performance for various SNR levels.
For low target and noise spectral correlation (\Cref{fig:varcor_snr_lowrmse,fig:varcor_snr_lowha}), the two algorithms perform comparably, with the wideband method being marginally less accurate for higher SNRs.
By contrast, when the target correlation is high, as in \Cref{fig:varcor_snr_highrmse,fig:varcor_snr_highha}, the two approaches perform similarly for less noisy scenarios, but the wideband method has a considerably lower error for lower SNRs, with a reduction of up to \SI{11}{\decibel} RMSE and \SI{0.12}{\radian} Hermitian angle at \SI{-10}{\decibel} SNR.
This experiment concludes the evaluations on synthetic data.
\begin{figure}[tbp]
\vspace{-5mm}
    \centering
    \subfloat[]{%
       \includegraphics[width=\figwidthnormal]{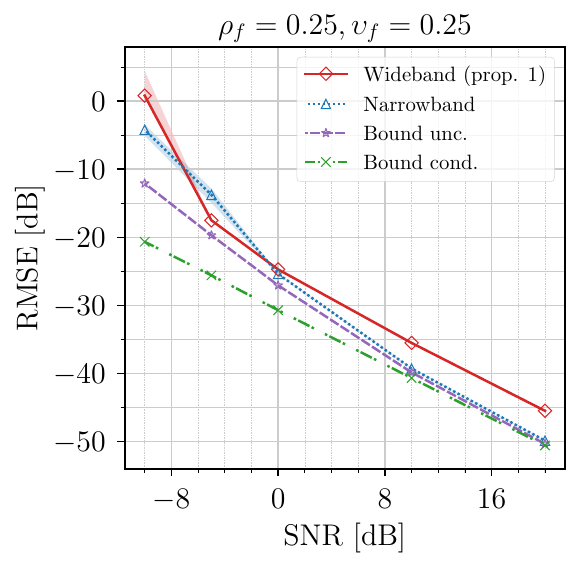}
       \label{fig:varcor_snr_lowrmse}}
    \hfill
    \subfloat[]{%
        \includegraphics[width=\figwidthnormal]{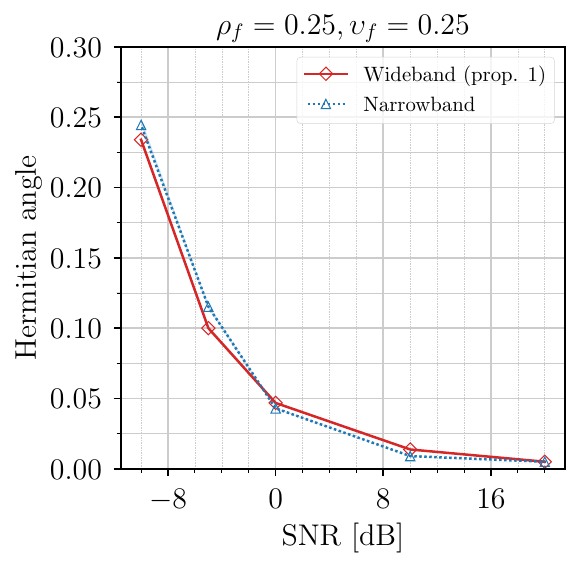}\label{fig:varcor_snr_lowha}}
    \hfill
    \\
    \subfloat[]{%
        \includegraphics[width=\figwidthnormal]{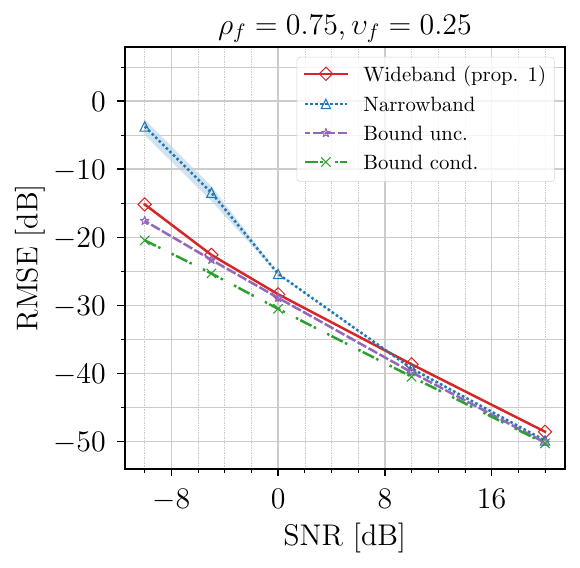}\label{fig:varcor_snr_highrmse}}
    \hfill
    \subfloat[]{%
       \includegraphics[width=\figwidthnormal]{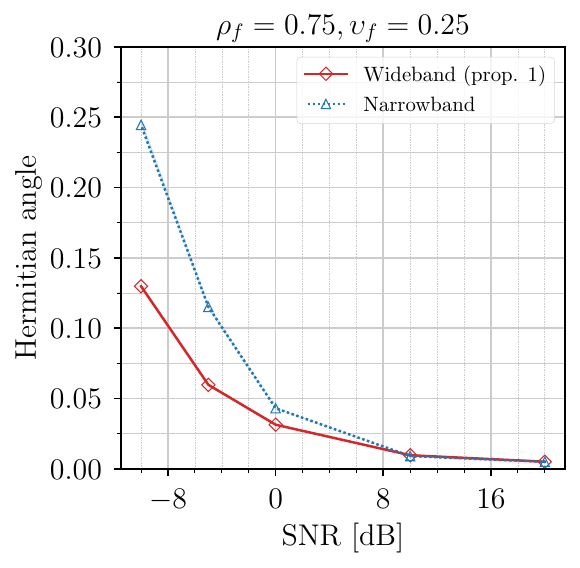}\label{fig:varcor_snr_highha}}
    \hfill
  \caption{Algorithm performance for non-uniform target and noise powers under varying SNR, with different levels of target correlation $\rho_f$. The top row ((a) and (b)) represents less correlated target ($\rho_f=0.25$), while the bottom row ((c) and (d)) shows highly correlated target ($\rho_f=0.75$). The left column corresponds to RMSE, and the right column shows the Hermitian angle.}
  \label{fig:varcor_snr}
\end{figure}
\blue{\subsection{Correlation coefficients of measured data}\label{sec:corr_analysis}\noindent
Before testing the RTF estimation algorithms on real data, it is useful to examine the correlation coefficients of measured adult speech and white Gaussian noise, and relate them to the simulated coefficients from \Cref{sec::equicorrelated,sec:varcorrelated}. 
To analyze the distribution of the spectral correlation coefficients, we first select a segment of length $T$ from the time-domain signal of interest.
After transforming this segment to the STFT domain, we estimate its spectral covariance matrix and spectral correlation coefficients, as detailed below.

The measured speech signal consists of anechoic \blue{male and female} speech recordings from the Harvard Word List\footnote{``Speech Intelligibility CD" from Neil Thompson Shade.}, sampled at $f_s = \SI{16}{\kilo\hertz}$.
The recordings last approximately 5 minutes.
For each of the $50$ Monte Carlo iterations, we randomly select a segment of length $T=\SI{0.35}{\second}$ from either of the two recordings.
Silent segments are discarded.
The STFT analysis is performed with window length $K_2 = 1024$, corresponding to $K_{+} = (K_2 / 2) + 1 = 513$ positive frequencies.
We use a square-root Hann window function and a \SI{75}{\percent} overlap between frames, such that the block-shift equals $R=256$ samples.
Therefore, the number of STFT frames available for estimating the spectral covariance matrices is $L \approx (T f_s - K_2 + R) / R \approx 20$.
The number of frames $L$ is thus small compared to the number of positive frequency bins $K_+$, complicating the estimation of the spectral covariance matrices.
To focus the analysis on relevant frequency bands, we only consider frequency components between \SIrange{0.08}{2.0}{\kilo\hertz}, reducing the number of frequency bins from $K_+=513$ to $K \approx 124$.
The remaining bins are ignored in the analysis.

The spectral covariance matrix $\hat{\v{R}} \in \mathbb{C}^{K \times K}$ is estimated from phase-adjusted STFT data, as described in \Cref{eq:phase_corr_cov}, to account for the phase shifts caused by overlapping frames. 
The magnitudes of the correlation coefficients for the off-diagonal elements are then obtained as:
\begin{align}
    \hat{\rho}_{f, k_1 k_2} = \Big|[\hat{\v{R}}]_{k_1 k_2} / \sqrt{\hat{\sigma}^2_{k_1 k_1} \hat{\sigma}^2_{k_2 k_2}}\Big|,
\end{align}
where $k_1, k_2 = 1,\ldots,K$, $k_1\neq k_2$, and $\hat{\sigma}^2_{k_1 k_1} = [\hat{\v{R}}]_{k_1 k_1}$.
Finally, we count the number of bins within each interval of the histogram and average the percentages across Monte Carlo realizations (\Cref{fig:hist}). 
As expected, the speech data (\Cref{fig:hist_speech}) exhibits significantly higher spectral correlation than the white noise data (\Cref{fig:hist_noise}). 
Approximately $20\%$ of the speech data shows correlations above $0.6$, while this value is nearly zero for the white noise data.
Surprisingly, the white noise data displays spectral correlations exceeding $0.2$ in over $40\%$ of the cases, which we hypothesize is due to spectral leakage caused by the finite-length windowing effect.
Given that the proposed RTF estimation method performs better for highly correlated target signals, the large number of low-correlation bins suggests that RTF estimation may improve by applying the proposed algorithm to the highly correlated bins only. 
The optimal design of such a method should be explored in future research.
}
\begin{figure}[tbp]
\vspace{-5mm}
    \centering
    \subfloat[]{%
        \includegraphics[width=0.48\linewidth]{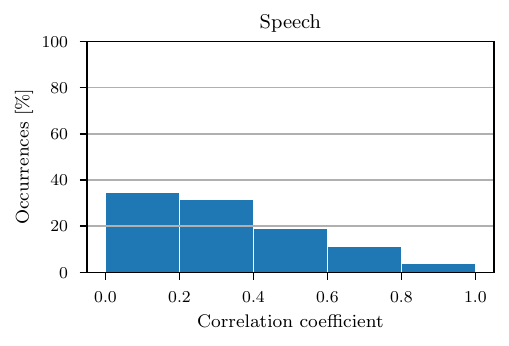}
        \label{fig:hist_speech}}
    \hfill
    \subfloat[]{%
       \includegraphics[width=0.48\linewidth]{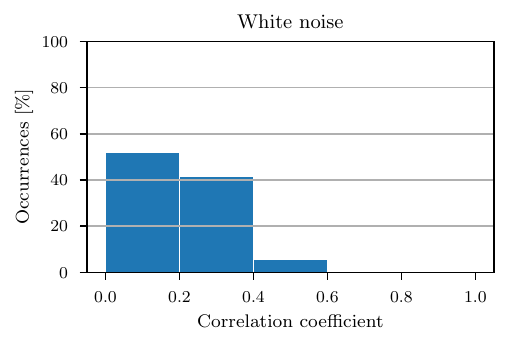}
       \label{fig:hist_noise}}
    \hfill
  \caption{Empirical distribution of spectral correlation coefficients for (a) male speech and (b) white noise signals. Speech has more  highly correlated bins than white noise, a stationary signal.}
  \label{fig:hist}
\end{figure}

\subsection{Real-data simulations}\label{sec:real_speech_exp}\noindent
The fourth set of experiments tests the RTF estimation algorithms on real speech data, maintaining the same settings as in \Cref{sec:corr_analysis}, except where noted.
\blue{A directional interferer is introduced by randomly sampling a real-world recording from the ESC-50 database \cite{piczak2015dataset}. 
Within the ESC-50 database, we sample from three selected categories  that contain approximately stationary sounds: engine noise, washing machine noise, and vacuum cleaner noise.
}
The default SNR for the interferer is set to \SI{0}{\decibel}.  
The target and the interfering signals are generated by convolution with the RIRs from the database in \cite{hadad_multichannel_2014}. 
The RIRs were measured with a linear microphone array with \SI{8}{} sensors, spaced \SI{8}{\cm} apart, in a room of size of \SI[parse-numbers=false]{6 \times 6 \times 2.4}{\meter}.
\blue{The average reverberation time of the room is $RT_{60} = \SI{0.61}{\second}$. All RIRs are cut after \SI{0.61}{\second} to reduce the length of the convolved signals while preserving most of the reverberation power.}
Unless otherwise specified, only the first $M=4$ microphones of the array are used.
The target and interfering sources are placed \SI{1}{\meter} away from the microphones, at angles of \SI{45}{\degree} and \SI{60}{\degree}, respectively.
Therefore, the target and interfering sources are spatially close to each other but exhibit different spectral properties.
\newcommand{\figwidth}{0.191\linewidth}
\begin{figure*}[t]
\vspace{-5mm}
    \centering
    \subfloat[]{%
        \includegraphics[width=\figwidth]{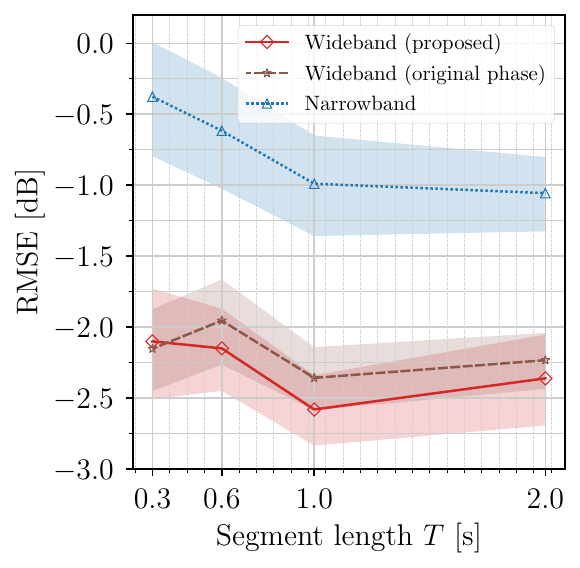}
        \label{fig:speech_timeframes_rmse}}
    \hfill%
    \subfloat[]{%
        \includegraphics[width=\figwidth]{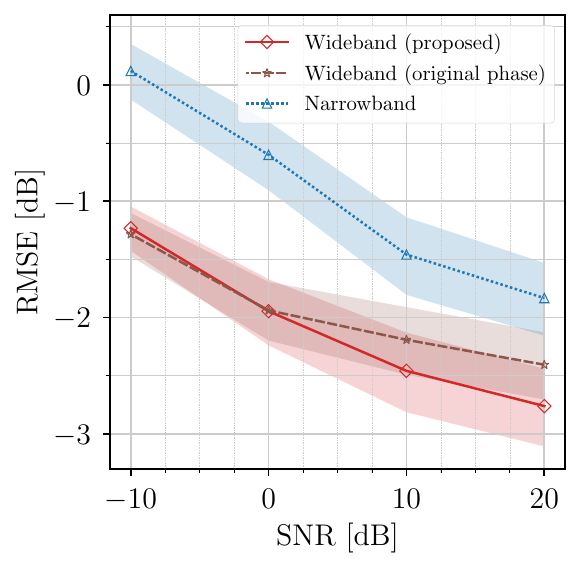}
        \label{fig:speech_snr_rmse}}
    \hfill%
    \subfloat[]{%
        \includegraphics[width=\figwidth]{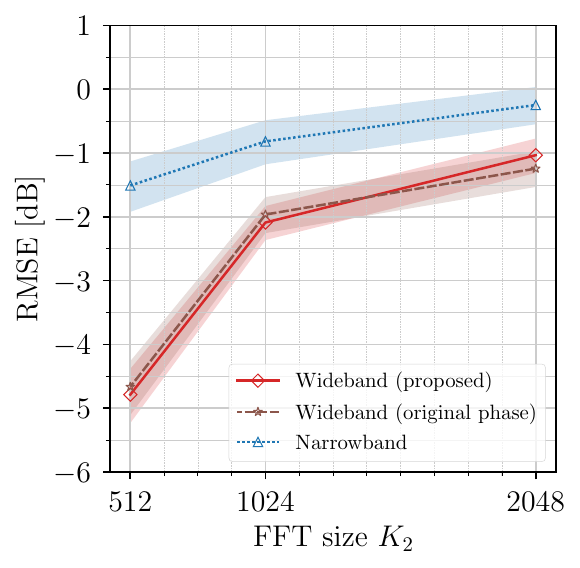}
        \label{fig:speech_nstft_rmse}}
    \hfill%
    \subfloat[]{%
        \includegraphics[width=\figwidth]{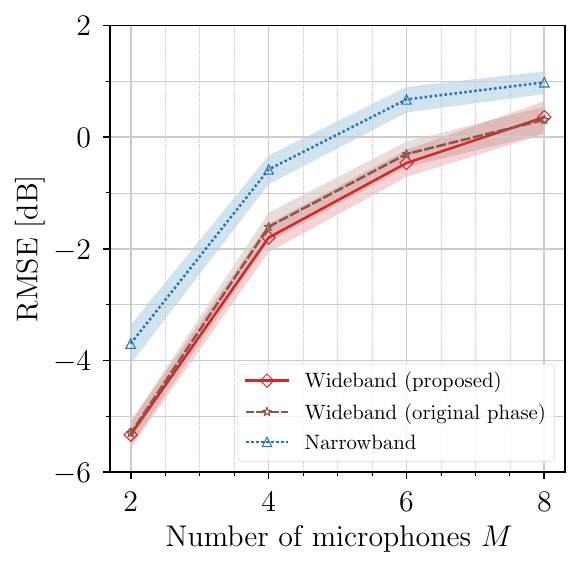}
        \label{fig:speech_nummics_rmse}}
    \hfill%
    \subfloat[]{%
    \includegraphics[width=\figwidth]{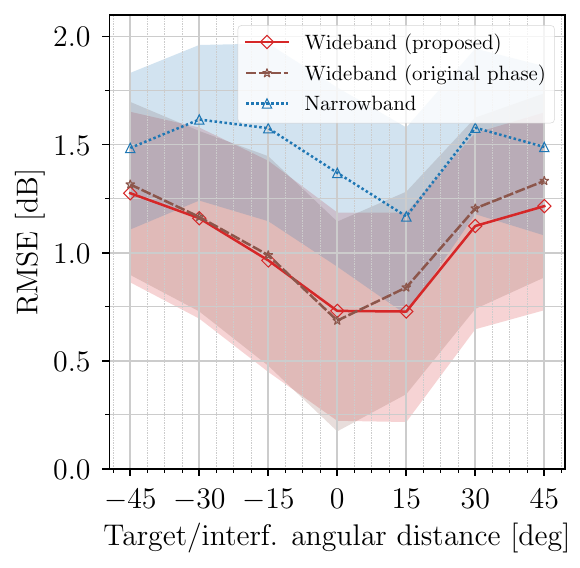}
    \label{fig:speech_noiseangle_rmse}}
    \hfill%
    \\
    \subfloat[]{%
       \includegraphics[width=\figwidth]{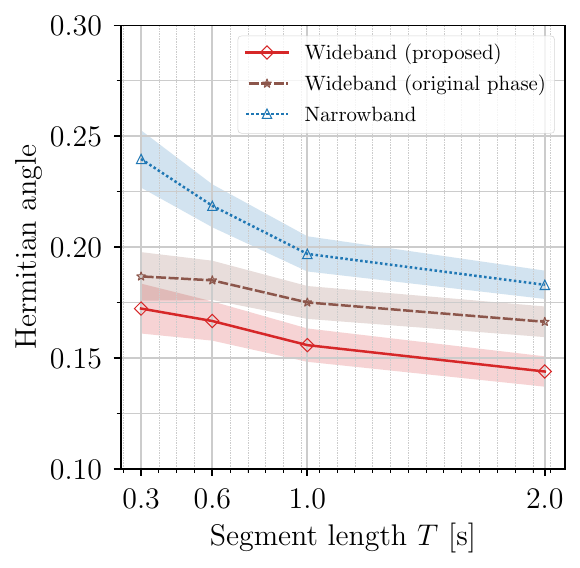}
       \label{fig:speech_timeframes_ha}}
   \hfill%
    \subfloat[]{%
       \includegraphics[width=\figwidth]{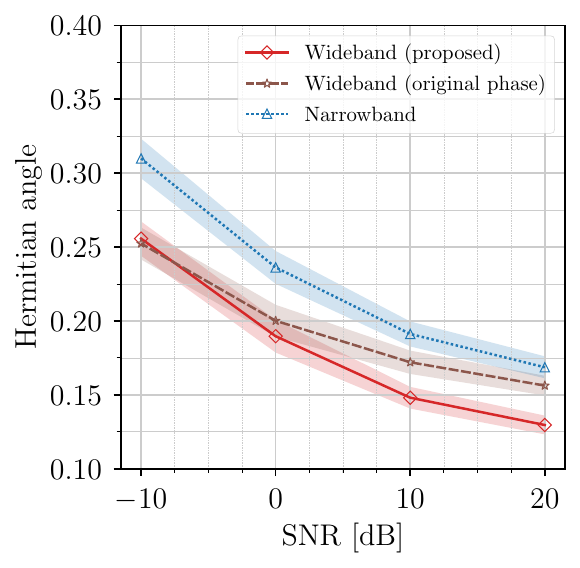}
       \label{fig:speech_snr_ha}}
    \hfill%
    \subfloat[]{%
       \includegraphics[width=\figwidth]{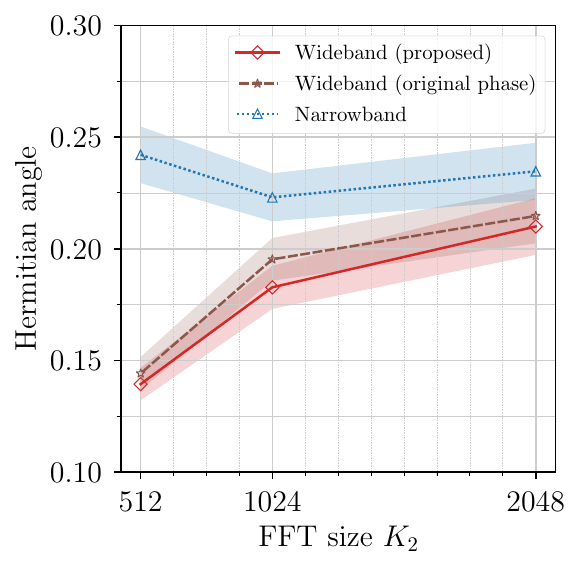}
       \label{fig:speech_nstft_ha}}
    \hfill%
     \subfloat[]{%
       \includegraphics[width=\figwidth]{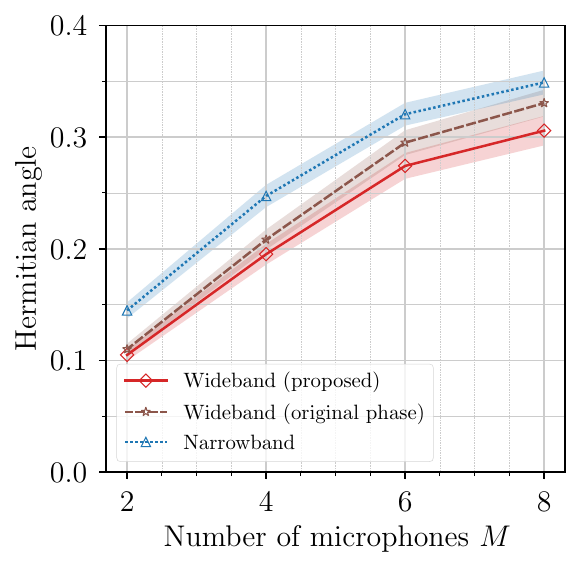}
       \label{fig:speech_nummics_ha}}
    \hfill%
    \subfloat[]{%
       \includegraphics[width=\figwidth]{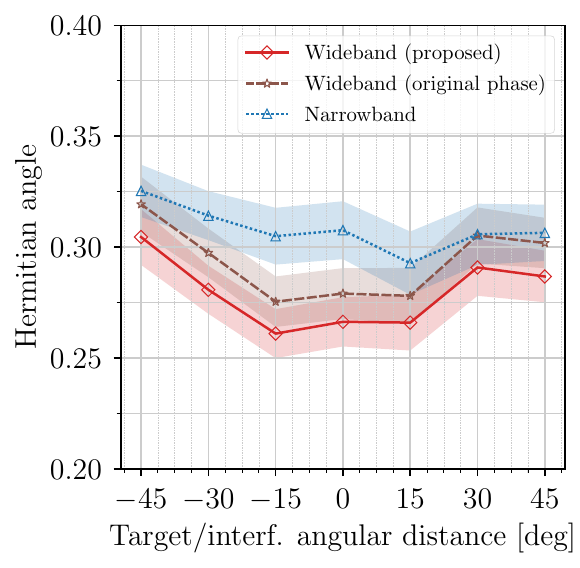}
       \label{fig:speech_noiseangle_ha}}
    \hfill%
  \caption{Performance of the algorithms for real speech, under (a-f) varying segment length $T$, (b-g) varying SNR, (c-h) varying FFT size, (d-i) varying number of microphones, and (e-j) varying angular distance between target and interferer. The error metrics are the RMSE (top) and the Hermitian angle (bottom).}
  \label{fig:exp_speech}
  \vspace{-5mm}
\end{figure*}
The noise covariance matrix $\hat{\v{R}}_v$ is estimated from a separate realization of the noise-only signal whose total duration is $T_n = \SI{2}{\second}$.
A distinct noise realization is used for each Monte Carlo iteration.
The covariance matrices $\hat{\v{R}}_x$ and $\hat{\v{R}}_v$ are estimated from phase-adjusted STFT data, as described in \Cref{eq:phase_corr_cov}, to account for the phase shifts caused by overlapping frames.
To gauge the improvements brought by the phase-corrected covariance estimator, we also depict the accuracy of the wideband SVD-direct algorithm when utilizing the sample covariance estimator of \Cref{eq:sample_cov_matrix}. 
Errors based on the sample covariance estimate with the original phase values are indicated by appending ``original phase" to the RTF estimator name.
When calculating the errors on the RTF estimates, we only retain the frequency bands for which the average power of the target signal at the microphones is no more than \SI{35}{\decibel} lower than that of the loudest frequency band.
Notice that the experiments include all bands in the specified frequency and SNR region. 
That means even those bands without correlation across frequency are included.

The wideband and narrowband algorithms are assessed based on the RMSE and the Hermitian angle metrics across different conditions.
\blue{We analyze the algorithms for different variations in the segment length $T$ (\Cref{fig:speech_timeframes_rmse,fig:speech_timeframes_ha}), which determines the number of frames $L$ available for estimating the covariance matrices, 
the SNR (\Cref{fig:speech_snr_rmse,fig:speech_snr_ha}),
the FFT size $K_2$ (\Cref{fig:speech_nstft_rmse,fig:speech_nstft_ha}), hence $K$,
the number of microphones $M$ (\Cref{fig:speech_nummics_rmse,fig:speech_nummics_ha}),
and the directional interferer angular position (\Cref{fig:speech_noiseangle_rmse,fig:speech_noiseangle_ha}).
}
The proposed phase-adjusted wideband algorithm outperforms the narrowband benchmark in all experiments of \Cref{fig:exp_speech}.
The performance gap between wideband and narrowband algorithms remains largely unchanged under varying conditions.
Notice that phase correction would not affect the performance of the narrowband algorithm, CW, which does not rely on inter-frequency correlations.
On the other hand, the wideband algorithm SVD-direct benefits significantly from incorporating phase-adjusted covariance matrices, especially in scenarios with a higher number of available time frames $L$ or microphones $M$, or when the SNR is high. 
The impact of phase correction appears to diminish under conditions where the covariance estimates are compromised due to a low number of time frames or a low SNR.
\blue{
In \Cref{fig:speech_nummics_rmse,fig:speech_nummics_ha}, we observe a decline in the performance of all algorithms as the number of microphones increases. 
This is likely because, as $M$ increases linearly, the number of elements in the covariance matrices ($M\times M$) increases quadratically. 
Consequently, with the data length remaining constant, the quality of the estimated covariance matrices worsens.
\Cref{fig:speech_noiseangle_rmse,fig:speech_noiseangle_ha} illustrate the RTF estimation errors when the target is positioned at $\SI{45}{\degree}$ and the interferer is placed at various angles within the \SIrange{0}{90}{\degree} range.
Although the wideband method outperforms the narrowband approach, the performance gap is greatest when the target and interferer are separated by a narrow angle, diminishing as the angular distance increases.
Exploiting spectral correlations proves to be most effective when there is significant overlap in the spatial correlations of the target and interferer.
}
\blue{
\subsection{Beamforming}\label{sec:beamforming_bf}\noindent
Knowledge of the RTF of a target speaker allows us to virtually steer a beamformer towards them and enhance the quality and the intelligibility of speech.
In this section, we evaluate the performance of an MVDR beamformer that uses the estimated RTFs to enhance a target signal, to get an impression of the performance improvement by using the proposed RTF estimator.
The output of the beamformer is an estimate of the target signal and its early reflections at the reference microphone, given by:
\begin{align}\label{eq:mvdr_out}
    \hat{d}_{1,k}(l) = \v{w}_k^H \v{x}_k(l),\quad l=1,\ldots,L~\text{and}~k=1,\ldots,K,
\end{align}
where $\v{w}_k \in \mathbb{C}^{M}$ are the beamforming weights, given by
\begin{align}\label{eq:mvdr}
    \v{w}_k = \frac{\hat{\v{R}}_n^{-1}(k) \hat{\v{a}}_k}{\hat{\v{a}}_k^H \hat{\v{R}}_n^{-1}(k) \hat{\v{a}}_k},\quad k=1,\ldots,K.
\end{align}
In \Cref{fig:bf}, we compare the output of the MVDR beamformer at various SNRs, using four different RTF estimates $\hat{\v{a}}_k$: SVD-direct (wideband), CW (narrowband), the true RTF $\v{a}_k$, and the unprocessed one at the reference microphone, \ie $\hat{\v{a}}_k = [1, 0, \ldots, 0]^T$. 
The output $\hat{d}_{1,k}(l)$ is evaluated using the short-time intelligibility index (STOI), the log-likelihood-ratio (llr) spectral distance, and the frequency-weighted segmental SNR (fwSNRseg) \cite{taal_algorithm_2011,loizou_2013}.
The same settings as in \Cref{sec:corr_analysis,sec:real_speech_exp} are used, except for the segment length $T$, which is extended to $T=\SI{1}{\second}$.
As expected, the MVDR beamformer performs best when using the true RTF, while using the noisy signal at the reference microphone results in the worst scores across all metrics.
The proposed algorithm generally outperforms CW according to all metrics in most conditions, except at very low SNRs.
}
\def\figbf{0.45\linewidth}
\begin{figure}[tbph]
\vspace{-3mm}
    \centering
    \subfloat[]{%
        \includegraphics[width=\figbf]{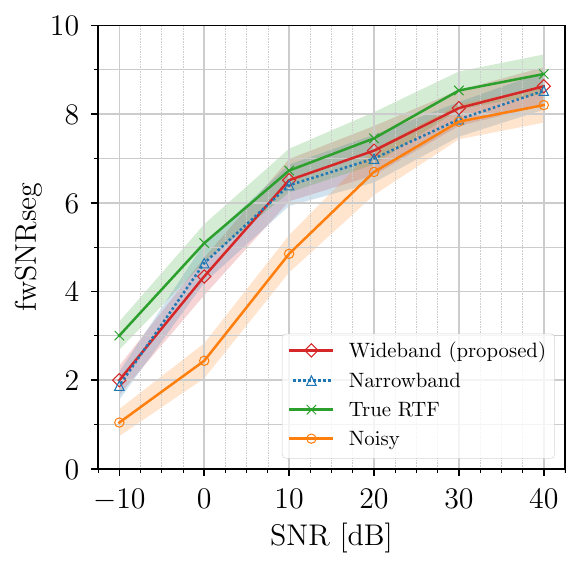} 
        \label{fig:bf_fwsnr}}
    \hfill%
    \subfloat[]{%
       \includegraphics[width=\figbf]{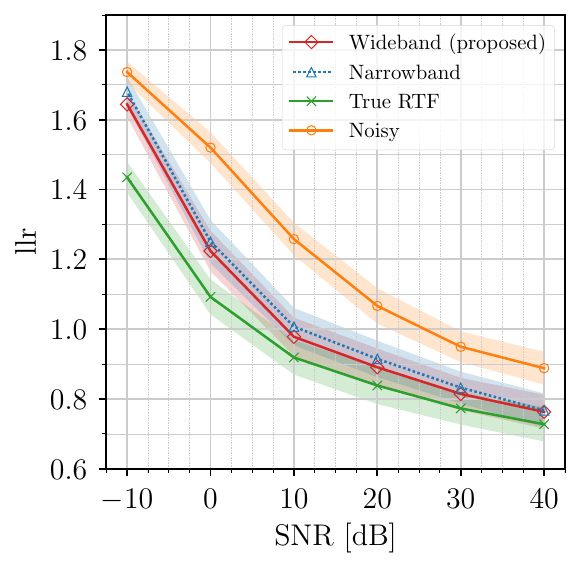}
       \label{fig:bf_llr}}
    \hfill%
    \\
    \centering
    \subfloat[]{%
        \includegraphics[width=\figbf]{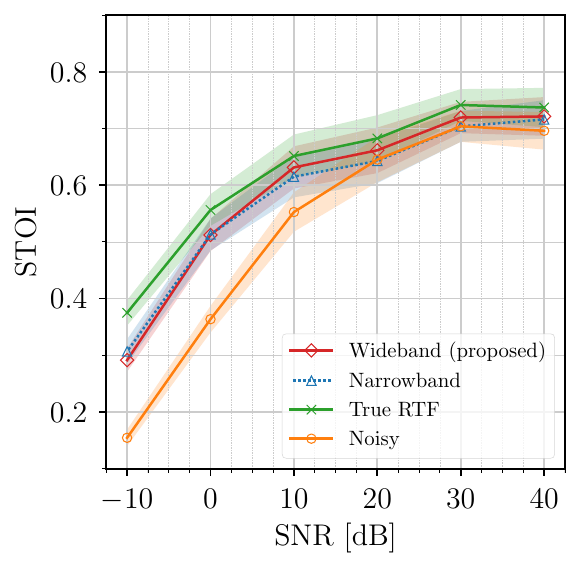}
        \label{fig:bf_stoi}}
    \hfill
  \caption{Evaluation of MVDR beamformer outputs using different RTF estimates, under varying SNR. Different subfigures correspond to different metrics: fwSNRseg (a), LLR (b), and STOI (c).}
  \label{fig:bf}
\end{figure}
\blue{
    \subsection{Computational complexity}\label{sec:complexity}\noindent
    Let us now compare the computational complexity of SVD-direct with the benchmark algorithm, CW.
    The cost of estimating the noisy covariance matrix is neglected as it is not considered part of the algorithms.
    While CW applies $K$ generalized eigendecompositions on spatial $M \times M$ covariance matrices, SVD-direct requires an initial generalized eigendecomposition on two matrices of size $KM \times KM$, followed by $K$ SVDs on matrices of size $M \times KM$. 
    Consequently, SVD-direct is slower than CW, primarily because the computational cost of eigenvalue decomposition increases cubically with the matrix size.
    Keeping the same settings as in the real-data experiments of \Cref{sec:real_speech_exp}, our measurements reveal that CW is approximately 200 times faster than SVD-direct using our non-optimized Python implementation on a MacBook Pro 16 inches (M1 Max chip).
}
\section{Additional discussion}\label{sec:discussion}\noindent
Our experiments yielded valuable insights into the performance of the narrowband CW and the wideband SVD-direct methods for RTF estimation, comparing them with the conditional and the unconditional performance bounds. Let us now examine the key findings.

Our investigation reveals a consistent trend favoring the wideband approach in scenarios with higher target correlation. Unlike the narrowband method, which remains unaffected by varying noise spectral correlation, the wideband approach also demonstrates occasional performance improvements when dealing with highly correlated noises.

The CRB analysis reveals a fascinating insight: when the noise spectral correlation $\upsilon_f$ is high, significant potential exists for further improvements in wideband channel estimation.
While this discovery pertains to channel estimation, it points to potential applications across various parameter estimation tasks.
The finding also provides a theoretical foundation for the observed empirical evidence in certain parametric and machine-learning approaches. Notably, some DNNs that operate on the entire time-frequency representation outperform narrowband alternatives in various speech enhancement tasks \cite{benesty_bifrequency_2012, tesch_insights_2023}.
It also offers a partial explanation for the intelligibility gains experienced by humans when detecting speech affected by harmonic noises \cite{gockel_asymmetry_2002}.

The correlation analysis of \Cref{sec:corr_analysis}, together with the real-speech experiments in \Cref{sec:real_speech_exp,sec:beamforming_bf}, confirm that natural speech possesses spectral correlations that can be exploited by the SVD-direct algorithm, leading to improved RTF estimation and beamforming performance in the scenarios under analysis.
However, it is worth noting that the wideband algorithm may not surpass narrowband algorithms in certain settings, such as when dealing with highly non-stationary noise sources.
This limitation arises from the inherent challenge of estimating large spectral-spatial covariance matrices from a limited number of frames.
For instance, typical speech has about 10-20 different sounds per second \cite[Chapter 15.3]{molisch_wireless_2011}.
This dynamic nature makes the estimation of spectral patterns more demanding compared to spatial patterns, which depend on the positions of the speaker and the listener.
Improved spectral correlation estimates may result by modeling the signals under analysis as realizations of cyclostationary processes, a particular class of spectrally correlated processes \cite{gardner_cyclostationarity_2006}.

\section{Conclusion}\label{sec:conclusions}\noindent
The uncorrelation of frequency components of a signal is a ubiquitous assumption that is often not verified in practice due to STFT processing and the non-stationary nature of signals.
In this paper, we investigated the role of spectral correlations in spatial processing and proposed a new subspace-based algorithm for the channel estimation task.
Indeed, accurate knowledge of the acoustic transfer functions between target speakers and microphones is crucial for spatial filtering in applications like MVDR beamforming.

Extensive numerical experiments demonstrated the superior performance of our wideband approach over the maximum-likelihood narrowband benchmark, yielding gains of more than \SI{10}{\decibel} RMSE in scenarios involving spectral correlations and low SNR. 
The proposed SVD-direct algorithm also exhibited competitive performance with real reverberant speech data contaminated by directional interferers and spatially uncorrelated noise. 
These achievements are obtained without compromising conceptual interpretability, as the channel estimate can be computed in closed form with just a few lines of code.

Furthermore, we derived CRBs for wideband channel estimation, revealing the potential for substantial accuracy improvements when noise, and to a lesser extent, the target exhibits high-frequency correlation.
This study serves as a starting point for understanding the impact of spectral correlations on parameter estimation for array processing.
Future endeavors will focus on refining the estimation of spectral-spatial covariance matrices, conducting more comprehensive tests on real-world measurements, and analyzing the influence of spectral correlation in speech enhancement and acoustic source separation.

\FloatBarrier
\appendices
\crefalias{section}{appendix}
\section{Proof of conditional CRB [\Cref{eq::crb_wideband_conditional}]}\label{app:proof_crb_cond}\noindent
\begin{proof}
To obtain the Cramér--Rao bound for the unknown parameters $\v{\theta}$, we first calculate the derivatives of the log-likelihood function with respect to the unknown parameters to form the Fisher information matrix.
Notice that the log-likelihood is a real-valued function of a complex variable $\v{\theta}$.
Thus, by evaluating the gradients using Wirtinger derivatives \cite{brandwood_complex_1983}, we can make use of the following properties.
\begin{lemma} \label{lemma:wirtinger}
    Let $f : \mathbb{C}^{p} \times \mathbb{C}^{p} \times \mathbb{C}^{q} \times \mathbb{C}^{q}  \mapsto \mathbb{R}$ be a real scalar function of four complex variables $\v{w}, \v{w}^* \in \mathbb{C}^{p}$ and $\v{z}, \v{z}^* \in \mathbb{C}^{q}$. Then
    \begin{enumerateproof}
        \item $\nabla_{\v{z}} f = (\grad_{{\v{z}}^*}f)^*$.
        \item $\nabla_{\v{z}} \nabla_{\v{w}}^H f = (\grad_{{\v{z}}^*} \grad_{{\v{w}}^*}^H f)^*$.
    \end{enumerateproof}
\end{lemma}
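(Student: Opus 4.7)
The plan is to unpack the Wirtinger derivative definitions and apply them directly; this is essentially a bookkeeping lemma, so the main challenge is organizing the indices and conjugations cleanly rather than overcoming a conceptual difficulty.

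For part (a), I would start from the scalar definition. Writing $z_i = x_i + j y_i$ with $x_i, y_i$ real, the Wirtinger derivatives are $\partial_{z_i} = \tfrac12(\partial_{x_i} - j\partial_{y_i})$ and $\partial_{z_i^*} = \tfrac12(\partial_{x_i} + j\partial_{y_i})$. Since $f$ is real-valued, both $\partial_{x_i} f$ and $\partial_{y_i} f$ are real, so $(\partial_{z_i^*} f)^* = \tfrac12(\partial_{x_i} f - j\partial_{y_i} f) = \partial_{z_i} f$. Reading this entrywise over $i = 1,\dots,q$ gives $\nabla_{\v z} f = (\nabla_{\v z^*} f)^*$, which is part (a).

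For part (b), I would first note the general identity $(\partial_{z_i} g)^* = \partial_{z_i^*} g^*$, valid for any sufficiently smooth (not necessarily real) $g$; it again follows immediately from the definition. I would then introduce the intermediate function $g = \partial_{w_j^*} f$; this $g$ is complex in general, but because $f$ is real, part (a) applied to $f$ gives $g^* = \partial_{w_j} f$. Combining the two identities yields
\begin{equation*}
\bigl(\partial_{z_i^*}\, \partial_{w_j}\, f\bigr)^{*} = \partial_{z_i}\bigl(\partial_{w_j}\, f\bigr)^{*} = \partial_{z_i}\, \partial_{w_j^*}\, f,
\end{equation*}
using commutativity of the mixed Wirtinger partials (Clairaut/Schwarz in the $x,y$ coordinates). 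Reading this entrywise, the $(i,j)$ entry of $\nabla_{\v z}\nabla_{\v w}^H f$ equals the complex conjugate of the $(i,j)$ entry of $\nabla_{\v z^*}\nabla_{\v w^*}^H f$, which is the matrix identity claimed in part (b).

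The main obstacle, if any, is being careful about two conventions: that $\nabla_{\v w}^H$ already involves a conjugation (so it distributes as $\partial_{w_j^*}$ when $f$ is real, by part (a)), and that taking a complex conjugate swaps $\partial_{z_i}$ with $\partial_{z_i^*}$. Once those two rules are stated, part (b) is a one-line composition of part (a) with itself, and no deeper machinery is needed.
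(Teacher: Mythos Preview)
Your proposal is correct and follows essentially the same approach as the paper: part (a) is the observation that the two Wirtinger operators are complex conjugates and $f$ is real, and part (b) composes that conjugation rule with commutativity of mixed partials. The paper compresses the argument for (b) into a single chain of operator-level identities ($\nabla_{\v w}^H = \nabla_{\v w^*}^T$, then a transpose swap, then back), whereas you spell it out entrywise, but the content is identical.
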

\begin{proof} \phantom{This text will be invisible}
\begin{enumerateproof}
        \item Follows from the fact the gradient operators are complex conjugates while $f$ is real.
        \item
        $\begin{aligned}[t]
            \nabla_{\v{z}} \nabla_{\v{w}}^H f &= \nabla_{\v{z}} \nabla_{{\v{w}}^*}^T f =
            (\nabla_{{\v{w}}^*} \nabla_{\v{z}}^T)^T f \\
            &= (\nabla_{{\v{w}}^*} \nabla_{{\v{z}}^*}^H)^T f = (\grad_{{\v{z}}^*} \grad_{{\v{w}}^*}^H f)^*.
        \end{aligned}$
    \end{enumerateproof}
    \vspace{-4mm}
\end{proof}
For notational convenience, we define $\mathcal{L}(\v{\theta}) = \ln{p(\v{X}; \v{\theta})}$.
We will begin by evaluating the bottom-right quadrant of the Fisher information matrix (\Cref{eq::fim}), defined as $-\E{\grad_{\v{a}^*}\grad^H_{\v{a^*}}\mathcal{L}(\v{\theta})}$.
Expanding the matrix product, the partial derivative of the log-likelihood $\mathcal{L}(\v{\theta})$ in \Cref{eq::likelihood_cond} with respect to $a_k^*$ is given by
\begin{align}
    \grad_{a_k^*} \mathcal{L}(\v{\theta}) &=
    -\grad_{a_k^*} \left(\sum_{l=1}^{L}(\v{x}(l) - \v{A}\v{s}(l))^H \v{R}_v^{-1} \v{v}(l)\right) \\
    &= \grad_{a_k^*} \left(\sum_{l=1}^{L} \sum_{j=1}^{KM} s_j^*(l) \v{a}_j^H \v{R}_v^{-1} \v{v}(l) \right) \\
    &= \sum_{l=1}^{L} s_k^*(l) \v{e}_k^T \v{R}_v^{-1} \v{v}(l).
    \end{align}
The second order derivative evaluates to
\begin{align}
    \grad_ {a_k^*}\grad_{a_m}\mathcal{L}(\v{\theta})
    &= -\sum_{l=1}^{L}s^*_k(l) \v{e}_k^T \v{R}_v^{-1} \v{e}_ms_m(l).
\end{align}
This leads to
\begin{align}
     -\E{\grad_{\v{a}^*}\grad^H_{\v{a^*}}\mathcal{L}(\v{\theta})} &= \sum_{l=1}^{L} \v{S}(l)^H \v{R}_v^{-1} \v{S}(l),
\end{align}
where we defined 
$\v{S}(l) = \diagp{\v{s}(l)}$.
With this, the Fisher information matrix is given by (cf. \Cref{lemma:wirtinger}) $\v{I}_{\v{\theta}} = \blkdiag(\v{B}^*, \v{B})$, where $\blkdiag(\cdot)$ is the operator that constructs a block diagonal matrix from the given matrices, and 
$\v{B} = \sum_{l=1}^{L} \v{S}(l)^H \v{R}_v^{-1} \v{S}(l)$.
The block-diagonal matrix $ \v{I}_{\v{\theta}}$ can be inverted block-wise, leading to 
\begin{align}\label{eq:inv_fish_cond_crb}
\v{I}_{\v{\theta}}^{-1} = \blkdiag((\v{B}^*)^{-1}, \v{B}^{-1}).
\end{align}
The variance of unbiased estimators of the ATF is, therefore, bounded by
$
    \text{var}(\hat{a}_i) \geq [(B^*)^{-1}]_{ii}, \quad i=1,\dots,M.
$

Transfer functions can be estimated in relation to a reference sensor $r$ with a function $\v{g}(\cdot)$ defined in \Cref{eq:g_def}.
Choosing $r=1$ as a reference sensor, the Jacobian matrix can be written as
\begin{align}\label{eq::dg_dtheta_wideband}
    \grad_{\v{\theta}}\v{g}
    &=
    \begin{bmatrix}
    \grad_{\v{a}}\v{g} & \grad_{\v{a}^*}\v{g}
    \end{bmatrix}
    =
    \begin{bmatrix}
    \grad_{\v{a}}\v{g} & \v{0}_{KM\times KM}
    \end{bmatrix},
\end{align}
where the right block of the gradient, $\grad_{\v{a}^*}\v{g}$, is null because $\v{g}(\cdot)$ does not depend on $\v{a}^*$.
We can further partition the left block of the gradient in $K$ ``fat" matrices $\grad_{\v{a}}\v{g} = [\grad_{\v{a}_1}^T\v{g}, \ldots, \grad_{\v{a}_K}^T\v{g}]^T$, where  $\grad_{\v{a}_k}\v{g} \in \mathbb{C}^{M\times KM},\ k=1,\dots,K$.
Individual blocks $\grad_{\v{a}_k}\v{g}$ can be written as
\begin{align}\label{eq:dg_dtheta_wideband_singleblk}
\scalemath{0.8}{
    \left[\begin{array}{@{}c|ccccc|c@{}}
    \multirow{5}{*}{$\v{0}_{M \times (k-1)M}$}& 0 & 0 & 0 & & 0 & \multirow{5}{*}{$\v{0}_{M \times (K - k)M}$}\\
    &-a_{k2}a_{k1}^{-2} & a_{k1}^{-1} & 0 & \dots & 0 &\\
    &-a_{k3}a_{k1}^{-2} & 0 & a_{k1}^{-1} & \dots & 0 &\\
    &\vdots &&& \ddots & \vdots & \\
    &-a_{kM}a_{k1}^{-2} & 0 & \dots & 0 & a_{k1}^{-1} &
    \end{array}\right],
    }
\end{align}
so that $\grad_{\v{a}}\v{g}$ shows a block-diagonal structure.
With this, we have from  \Cref{eq:inv_fish_cond_crb} and \Cref{eq:crb_det_function}
\begin{align}\label{eq:conditional_crb_final}
    \v{R}_{\hat{\v{\phi}}} \succeq (\grad_{\v{\theta}}\v{g})
    \v{I}^{-1}_{\v{\theta}}
    (\grad^H_{\v{\theta}}\v{g}) =
    (\grad_{\v{a}}\v{g})
    (\v{B}^*)^{-1}
    (\grad^H_{\v{a}}\v{g}).
\end{align}
The CRB corresponds to the diagonal elements of the matrix at the right-hand side of \Cref{eq:conditional_crb_final}, as stated in \Cref{eq::crb_wideband_conditional}.
\end{proof}
\section{Proof of unconditional CRB [\Cref{eq::crb_wideband_unconditional}]}\label{app:proof_crb_uncond}\noindent
We first list some derivative rules (\eg, \cite{kay_fundamentals_1993}) for a generic square matrix $\v{X}(\v{\theta})$, where the values of $\v{X}$ depend on $\v{\theta}$.
\begin{align}
    \grad_{\theta_i} \ln(\v{X}) &= \trace({\v{X}^{-1} \grad_{\theta_i} \v{X}}), \label{eq::derivative_log} \\
    \grad_{\theta_i} \trace(\v{X}) &= \trace(\grad_{\theta_i} \v{X}), \label{eq::derivative_trace} \\
    \grad_{\theta_i} \v{X}^{-1} &= -\v{X}^{-1} (\grad_{\theta_i} \v{X}) \v{X}^{-1}. \label{eq::derivative_inverse}
\end{align}
The computation of the unconditional CRB requires the calculation of first- and second-order derivatives of the log-likelihood in \Cref{eq::likelihood_uncond}, reproduced here for easy reference:
\begin{equation}\label{eq::likelihood_uncond_repeated}
\mathcal{L}(\v{\theta}) =
    -L\ln{|\pi\v{R}_x|} - L \trace{(\hat{\v{R}}_x\v{R}_x^{-1})}.
\end{equation}
\begin{proof}
The partial derivative of the log-likelihood $\mathcal{L}(\v{\theta})$ in \Cref{eq::likelihood_uncond} with respect to $a^*_k$ is given by
\begin{align}\label{eq::crb_uncond_first_derivative}
    \grad_{a^*_k}\mathcal{L}(\v{\theta}) = -L\trace{(\v{R}_x^{-1}\v{F}_k)} - L \trace{(\v{R}_x^{-1}\v{F}_k\v{R}_x^{-1}\hat{\v{R}}_x)},
\end{align}
where we introduced
\begin{align}\label{eq::f_k_definition}
    \v{F}_k = \grad_{a^*_k} \v{R}_x = \v{A}\v{R}_s \v{E}^{kk}
\end{align}
and $\v{E}^{ij}$ is zero everywhere and 1 at entry $ij$.
The first term on the right-hand side of \Cref{eq::crb_uncond_first_derivative} is obtained directly from \Cref{eq::derivative_log}.
The second term is obtained by using the derivative of the trace and of the matrix inverse as given by \Cref{eq::derivative_trace} and \Cref{eq::derivative_inverse}, respectively, along with the cyclic property of the trace operator.
It is important to note that the estimate $\hat{\v{R}}_x$ is independent of the parameter vector $\v{a}$.
The second-order partial derivative of $\mathcal{L}(\v{\theta})$ writes
\begin{multline}
\label{eq::crb_uncond_second_derivative_initial}
    \grad_{a_m} \grad_{a^*_k} \mathcal{L}(\v{\theta}) = \\-L\grad_{a_m}[\trace{(\v{R}_x^{-1}\v{F}_k)} + \trace{(\v{R}_x^{-1}\v{F}_k\v{R}_x^{-1}\hat{\v{R}}_x})].
\end{multline}
The derivatives of the two terms can be evaluated separately.
For the first term in \Cref{eq::crb_uncond_second_derivative_initial}, we have
\begin{multline}\label{eq::crb_uncond_second_derivative_first_term}
    \grad_{a_m}\trace{(\v{R}_x^{-1}\v{F}_k)} = \trace{(\grad_{a_m}\v{R}_x^{-1}\v{F}_k)} =\\
    = \trace{(-\v{R}_x^{-1}\v{G}_m \v{R}_x^{-1} \v{F}_k + \v{R}_x^{-1}\v{H}_{mk})},
\end{multline}
which is obtained by applying the product rule together with \Cref{eq::derivative_trace} and \Cref{eq::derivative_inverse}.
Here, $\v{G}_m$ and $\v{H}_{mk}$ are defined as
\begin{gather}
\v{G}_m = \grad_{a_m} \v{R}_x = \v{E}^{mm} \v{R}_s \v{A}^H,\label{eq::g_m_definition}\\
\v{H}_{mk} = \grad_{a_m} \v{F}_k = \grad_{a_m}\grad_{a^*_k} \v{R}_x = \v{E}^{mm} \v{R}_s \v{E}^{kk}.
\end{gather}
The second term in \Cref{eq::crb_uncond_second_derivative_initial} is given by
\begin{multline}\label{eq::crb_uncond_second_derivative_second_term}
    \grad_{a_m}\trace{(\v{R}_x^{-1}\v{F}_k\v{R}_x^{-1}\hat{\v{R}}_x}) =
    \trace{\grad_{a_m}(\v{R}_x^{-1}\hat{\v{R}}_x}\v{R}_x^{-1}\v{F}_k) =\\
    \trace{\left[
    \v{R}_x^{-1} \hat{\v{R}}_x \v{R}_x^{-1}
    (\v{G}_m \v{R}_x^{-1} \v{F}_k - \v{H}_{mk}
    + \v{F}_k \v{R}_x^{-1} \v{G}_m)
    \right]},
\end{multline}
which was again obtained by utilizing the product rule, \Cref{eq::derivative_trace}, \Cref{eq::derivative_inverse}, and rearranging the resulting terms.
By combining \Cref{eq::crb_uncond_second_derivative_first_term} and \Cref{eq::crb_uncond_second_derivative_second_term}, the negative expected second-order partial derivative follows as
\begin{multline}\label{eq:}
    -\E{\grad_{a_m} \grad_{a^*_k} \mathcal{L}(\v{\theta})} = \\
    L \expectedValue \Big\{
    \, \trace{\left[
    -\v{R}_x^{-1}(\v{G}_m \v{R}_x^{-1} \v{F}_k - \v{H}_{mk})
    \right]}\ + \\
    + \trace{\left[
    \v{R}_x^{-1} \hat{\v{R}}_x \v{R}_x^{-1}
    (\v{G}_m \v{R}_x^{-1} \v{F}_k - \v{H}_{mk}
    + \v{F}_k \v{R}_x^{-1} \v{G}_m)
    \right]}\Big\} \\
    = L \trace{\left(
            \v{R}_x^{-1} \v{F}_k \v{R}_x^{-1} \v{G}_m
            \right)}
    .
\end{multline}
To collect the expected values of the second-order partial derivative, we define a matrix $\v{C}_1$ such that $[\v{C}_1]_{mk} = -\E{\grad_{a_m}\grad_{a^*_k} \mathcal{L}(\v{\theta})}$.
The elements of the bottom left block of the Fisher information matrix can be similarly obtained as $[\v{C}_2]_{mk} = -\E{\grad_{a_m}\grad_{a_k} \mathcal{L}(\v{\theta})} = L \trace{\left(
\v{R}_x^{-1} \v{G}_k \v{R}_x^{-1} \v{G}_m
\right)}$,
and the elements $-\E{\grad_{a_m^*}\grad_{a_k^*} \mathcal{L}(\v{\theta})}$ of the top right block follow as $\v{C}_2^H$ from \Cref{lemma:wirtinger}.
The inverse of the Fisher information matrix for the unconditional case can then be represented by:
\begin{align}\label{eq:inv_fish_uncond_crb}
    \v{I}_{\v{\theta}}^{-1} =
    \begin{bmatrix}
    \v{C}_1^* & \v{C}_2^H \\
        \v{C}_2 & \v{C}_1
    \end{bmatrix}^{-1}
    =
    \begin{bmatrix}
    \v{C} & * \\
    * & *
    \end{bmatrix},
\end{align}
where $\v{C} \in \mathbb{C}^{KM\times KM}$ is obtained by selecting the first $KM$ rows and columns from $\v{I}_{\v{\theta}}^{-1}$.
To derive the bound for estimating the \textit{relative} transfer function, we employ the mapping $\v{g}(\cdot)$ as defined in \Cref{eq:g_def}. Using \Cref{eq:crb_det_function}, the inverse Fisher information matrix is left- and right-multiplied by $\grad_{\v{\theta}}\v{g}$, which is defined in \Cref{eq::dg_dtheta_wideband}, resulting in the final form of the bound given by \Cref{eq::crb_wideband_unconditional}.
\end{proof}

\bibliographystyle{IEEEtran}

\begin{IEEEbiography}[{\includegraphics
[width=1in,height=1.25in,clip,
keepaspectratio]{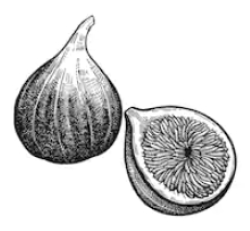}}]
{Giovanni Bologni} received his BSc in Electronics Engineering from the Polytechnic University of Turin, Italy, in 2017 and his MSc in Electrical Engineering from the Delft University of Technology, Netherlands, in 2020.
From 2020 to 2022, he worked at Sony R\&D Stuttgart, Germany, initially as an intern and later as a signal processing engineer.
He is currently pursuing a PhD in the Signal Processing Systems (SPS) group at Delft University of Technology. 
His research focuses on statistical signal processing for audio and speech, with an emphasis on multichannel speech enhancement for hearing aids.

\end{IEEEbiography}

\end{document}